\newtheorem{remark}{Remark}{\normalfont}{\normalfont}
\newtheorem{theorem}{Theorem}
\newtheorem{assumption}{Assumption}
\newtheorem{lemma}{Lemma}
\newenvironment{proof}{\emph{Proof:}}{\hfill$\square$}
\renewcommand{\figurename}{Figure}
\renewcommand{\theenumi}{\arabic{enumi}}
\begin{document}

	\begin{frontmatter}
		\runtitle{Direct design of explicit predictive controllers with priors}
		\title{Data-driven design of explicit predictive controllers\\using model-based priors \thanksref{funding}
		}
		
		\thanks[funding]{This project was partially supported by the Italian Ministry of University and Research under the PRIN'17 project \textquotedblleft Data-driven learning of constrained control systems", contract no. 2017J89ARP.}
		
		\author[Polimi]{Valentina Breschi}\ead{valentina.breschi@polimi.it},
		\author[Polimi]{Andrea Sassella}\ead{andrea.sassella@polimi.it},
		\author[Polimi]{Simone Formentin}\ead{simone.formentin@polimi.it}
		
		\address[Polimi]{Dipartimento di Elettronica, Informazione e Bioingegneria, Politecnico di Milano, Piazza L. da Vinci 32, 20133 Milano, Italy.}
		
		\begin{keyword}    
			Data-driven control; learning-based control; predictive control; explicit predictive control                       
		\end{keyword}

		\begin{abstract}	
			In this paper, we propose a data-driven approach to derive explicit predictive control laws, without requiring any intermediate identification step. The keystone of the presented strategy is the exploitation of available priors on the control law, coming from model-based analysis. Specifically, by leveraging on the knowledge that the optimal predictive controller is expressed as a piecewise affine (PWA) law, we directly optimize the parameters of such an analytical controller from data, instead of running an on-line optimization problem. As the proposed method allows us to automatically retrieve also a model of the closed-loop system, we show that we can apply model-based techniques to perform a stability check prior to the controller deployment. The effectiveness of the proposed strategy is assessed on two benchmark simulation examples, through which we also discuss the use of regularization and its combination with averaging techniques to handle the presence of noise. 
		\end{abstract}
		
	\end{frontmatter}
	
	\section{Introduction} 
	For its capability of handling constraints, \emph{Model predictive control} (MPC) is a widely employed technique for advanced control applications (see, \emph{e.g.,} \cite{Rawlings2000,QIN2003,Hrovat2012,Faulwasser2021,Matne2014}). Due to the increasing complexity of the systems to be controlled, the model required by MPC is often no longer parameterized from the physics but learned from data in a \textit{black-box} fashion. However, this identification step generally takes the lion's share of the time and effort required for control design. In the last years, research has thus benched into two main direction. On the one side, efforts has been focused on improving and easing learning procedures \cite{Chiuso2019,LJUNG2010}. On the other side, many approaches have been proposed to directly employ data for the design of predictive controllers, while bypassing any model identification step. Among existing works in this direction, we recall the foundational contributions of \cite{coulson2019data,Berberich2021}, which have been extended to handle tracking problems \cite{Berberich2020b}, to deal with nonlinear systems \cite{Berberich2021b}, and to improve the performance in the presence of noise \cite{Dorfler2021,Coulson2019}, just to mention a few. For both traditional and data-driven predictive controls, the computational effort required to solve the related constrained optimization problem is known to be a potential limit for its application, especially for fast-sampling systems. Nonetheless, when the MPC problem is relatively small, \emph{i.e.,} one has to control a low order system and/or the prediction horizon is relatively short, this limitation can be overcome by explicitly deriving its solution \cite{Bemporad2002b}. In fact, when the cost of the optimization problem is quadratic and the constraints are linear, the explicit solution of MPC is known to be a \emph{Piece-Wise Affine} (PWA) state feedback law. 
	
	In this work, we propose an approach \textit{to directly learn explicit predictive control laws from data}
	. More specifically, we initially build on the foundational results in \cite{DePersis2019,Willems2005} and on model-based priors (namely, the aforementioned fact that the optimal solution of linear MPC is PWA \cite{Bemporad2002b}), to construct a data-driven, multi-step closed-loop predictor. The latter is exploited to construct a fully data-based optimization problem, yielding the estimation of the \textit{parameters} of the optimal predictive control law. As a by-product, we obtain a data-driven description of the closed-loop behavior, which can be used in combination with existing model-based techniques (see, \emph{e.g.,} \cite{Mignone}) to check the stability of the system controlled with the data-driven explicit controller, prior to its deployment. As far as we are aware, this is the first time a data-driven predictive control strategy is provided together with a preliminary assessment of its closed-loop performance.
	
	We should mention here that an early attempt to obtain a data-driven counterpart of explicit MPC was already carried out in \cite{sassella2021learning}. However, the strategy proposed therein relies on an implicit open-loop identification step
	. Another method was presented in \cite{Breschi2021b}, by relying on the behavioral predictor used in \cite{Coulson2019,Berberich2021}. Even though that method involves no open-loop identification phase and it does not require the state to be fully measurable, the latter does not leverage on priors coming from model-based analysis and design. Nonetheless, by leveraging on priors, we are here able to retrieve a data-based characterization of the closed-loop and, thus, practically assess its stability in a data-driven fashion prior to its deployment. This is instead not possible in \cite{Breschi2021b}, where stability cannot be directly assessed nor it is theoretically guaranteed in presence of noisy data. 
	
	The remainder of the paper is organized as follows. The targeted problem is formalized in Section~\ref{sec:problem}, while all the steps required to obtain its data-driven formulation from priors are introduced in Section~\ref{sec:priors}. The explicit control law is derived in Section~\ref{sec:learning}, where we additional discuss practical aspects for its implementation, certified deployment and noise handling. The effectiveness of the proposed strategy is assessed on two benchmark examples in Section~\ref{sec:examples}. Section~\ref{sec:conclusions} concludes the work and indicates some directions for future research.      
	
	\paragraph*{Notation}
	We denote with $\mathbb{N}_{0}$ the set of natural numbers, that includes zero. Let $\mathbb{R}$, $\mathbb{R}^{n}$ and $\mathbb{R}^{n \times m}$ be the set of real numbers, column vectors of dimension $n$ and $n \times m$ dimensional real matrices, respectively. Given $B \in \mathbb{R}^{m \times n}$, its transpose is $B^{\top}$, its Moore-Penrose inverse is $B^{\dagger}$ and, when $m=n$ its inverse is indicated as $B^{-1}$. Given a vector $v \in \mathbb{R}^{n}$, $[v]_{i:j}$ indicates its rows from $i$ to $j$, with $i \leq j \leq n$. For a matrix $B \in \mathbb{R}^{n \times m}$, $[B]_{1:i,1:j}$ denotes a sub-matrix comprising the first $i$ rows and $j$ columns of $B$, with $i \leq n$ and $j \leq m$. Identity matrices are denotes as $I$, while zero matrices and vectors will be denoted as $0$. If a matrix $Q \in \mathbb{R}^{n \times n}$ is positive definite (positive semi-definite), this is denoted as $Q \succ 0$ ($Q \succeq 0$). Given a vector $x \in \mathbb{R}^{n}$, the quadratic form $x'Qx$ is compactly indicated as $\|x\|_{Q}^{2}$. Given a signal $\{\nu_{t} \in \mathbb{R}^{m}\}_{t \in \mathbb{N}_{0}}$ and $0<L<T$, we denote with $N_{0,L,T-1} \in \mathbb{R}^{mL \times T-L+1}$ the associated Hankel matrix
	\begin{equation}
		N_{0,L,T-1}=\begin{bmatrix}
			\nu(0) & \nu(1) & \cdots & \nu(T-L)\\
			\nu(1) & \nu(2) & \cdots & \nu(T-L+1)\\
			\vdots & \vdots & \ddots & \vdots\\
			\nu(L) & \nu(L+1) & \cdots & \nu(T-1) 
		\end{bmatrix},
	\end{equation}
	while, for $i,j \in \mathbb{N}_{0}$, we introduce
	\begin{equation}
		N_{i,T-j}=\begin{bmatrix}
			\nu(i) & \nu(i+1) & \cdots & \nu(T-j)
		\end{bmatrix},~~i<T-j.
	\end{equation}
	
	\section{Problem formulation}\label{sec:problem}
	Consider the class $\mathcal{S}$ of discrete-time \emph{linear, time invariant} (LTI), \emph{controllable} systems with \emph{fully measurable} state, \emph{i.e.,}
	\begin{equation}\label{eq:system}
		\mathcal{S}: \begin{cases}
			x(t+1)=Ax(t)+Bu(t),\\
			y^{\mathrm{o}}(t)=x(t),
		\end{cases}
	\end{equation}
	where $x(t) \in \mathbb{R}^{n}$ denotes the state of $\mathcal{S}$ at time $t \in \mathbb{N}_{0}$, $u(t) \in \mathbb{R}^{m}$ is an exogenous input and $y^{\mathrm{o}}(t) \in \mathbb{R}^{n}$ is the associated \emph{noiseless} output. Let us consider the following \emph{model predictive control} (MPC) problem:
	\begin{subequations}\label{eq:MPC}
		\begin{align}
			& \underset{\{\tilde{u}(k)\}_{k=0}^{L-1}}{\mbox{minimize}}~~~ \sum_{k=0}^{L-1}\left[\|\tilde{x}(k)\|_{Q}^{2}\!+\!\|\tilde{u}(k)\|_{R}^{2}\right]\!+\!\|\tilde{x}(L)\|_{P}^{2} \label{eq:MPCcost}\\
			&\quad \mbox{s.t.}~~ \tilde{x}(k\!+\!1)\!=\!A\tilde{x}(k)\!+\!B\tilde{u}(k),~~k\!=\!0,\ldots,L\!-\!1, \label{eq:MPCmodel}\\
			& \qquad \quad C_{x}\tilde{x}(k)+C_{u}\tilde{u}(k) \leq d,~~k\!=\!0,\ldots,L\!-\!1,\label{eq:MPCconstr}\\
			& \qquad \quad \tilde{x}(0)=x(t). \label{eq:MPCinit}
		\end{align}
	\end{subequations}
	The objective of \eqref{eq:MPC} is to steer both the (predicted) state $\tilde{x}(k)$ and the input $\tilde{u}(k)$ to zero, over a prediction horizon of prefixed length $L>0$. Optimality is indeed dictated by: $(i)$ the distance of the predicted state from zero, penalized with $Q \succeq 0$ over the whole horizon except for the terminal state (weighted via $P \succeq 0$), and $(ii)$ the control effort, penalized via $R \succ 0$. Meanwhile, a set of $n_c$ \emph{polyhedral constraints} dictated by \eqref{eq:MPCconstr} has to be satisfied, with $C_{x} \in \mathbb{R}^{n_{c} \times n}$, $C_{u} \in \mathbb{R}^{n_c \times m}$ and $d \in \mathbb{R}^{n_c}$, while relying on the latest information on the system (see \eqref{eq:MPCinit}).
	Assume also that the matrices $A \in \mathbb{R}^{n \times n}$ and $B \in \mathbb{R}^{n \times m}$ characterizing the dynamics of $\mathcal{S}$ are \emph{unknown}, and that we can access a set of input/output data pairs $\mathcal{D}_{T}=\{\mathcal{U}_{T},\mathcal{Y}_{T}\}$, where $\mathcal{U}_{T}$ and $\mathcal{Y}_{T}$ denote the available input and output sequences, respectively, satisfying the following assumptions.
	\begin{assumption}[Persistently exciting inputs]\label{ass:persistency of excitation}
		The input sequence $\mathcal{U}_{T}=\{u(t)\}_{t=0}^{T}$ is persistently exciting of order $n+L$.
	\end{assumption}
	\begin{assumption}[Noisy outputs]\label{ass:noisy_outputs}
		The output sequence $\mathcal{Y}_{T}=\{y(t)\}_{t=0}^{T}$ is corrupted by noise, namely 
		\begin{equation}\label{eq:noisy_output}
			y(t)=y^{\mathrm{o}}(t)+w(t),
		\end{equation}
		where $v(t)$ is the realization of a zero mean white noise with covariance $\Omega \in \mathbb{R}^{n \times n}$.
	\end{assumption}	 
	\begin{assumption}[Sufficiently long dataset]\label{ass:long_seq}
		The length $T$ of the dataset $\mathcal{D}_{T}$ satisfies the following:
		\begin{equation*}
			T \geq (m+1)(n+L)-1.
		\end{equation*} 
	\end{assumption}		
	The goal of this work is to directly exploit the available data to find an \emph{explicit} solution to \eqref{eq:MPC}, under Assumptions \ref{ass:persistency of excitation}-\ref{ass:long_seq}, while \emph{bypassing} any identification step. 
	
	\section{Exploiting priors for explicit DDPC}\label{sec:priors}
	To attain our goal, it is fundamental to replace the model in \eqref{eq:MPCmodel} with an expression that directly depends on the available data, by also not forgetting what we have learned from model-based predictive control. 
	To start with, we thus recall the following lemma, explicitly stating the form of the optimal explicit predictive controller in our setting \cite{Bemporad2002b}.
	\begin{lemma}[On the solution of \eqref{eq:MPC}]\label{lemma_sol}
		Let $U$ denote the vector stacking the inputs over the prediction horizon, \emph{i.e.,}
		\begin{equation}\label{eq:input_seq}
			U=U(x(t))=\begin{bmatrix}
				\tilde{u}(0)\\
				\tilde{u}(1)\\
				\vdots\\
				\tilde{u}(L-1)
			\end{bmatrix} \in \mathbb{R}^{mL}.
		\end{equation}
		The optimal control sequence $U^{\star}=U^{\star}(x(t))$ solving \eqref{eq:MPC} is a continuous \emph{piecewise affine} (PWA) function of $x(t)$.  
	\end{lemma}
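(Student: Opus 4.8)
The plan is to recast \eqref{eq:MPC} as a \emph{multiparametric quadratic program} (mpQP) in the decision vector $U$ of \eqref{eq:input_seq}, with the current state $x(t)$ playing the role of the parameter, and then to invoke the structural theory of strictly convex mpQPs.

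First I would eliminate the dynamics. Using \eqref{eq:MPCmodel} together with the initialization \eqref{eq:MPCinit}, each predicted state can be written as $\tilde{x}(k) = A^{k} x(t) + \sum_{j=0}^{k-1} A^{k-1-j} B\, \tilde{u}(j)$, so that, stacking $\tilde{x}(0),\dots,\tilde{x}(L)$, one obtains $\tilde{X} = \mathcal{A}\, x(t) + \mathcal{B}\, U$ for suitable matrices $\mathcal{A},\mathcal{B}$ built from $A$ and $B$. Substituting this into the cost \eqref{eq:MPCcost} yields a quadratic function of the form $\tfrac{1}{2} U^{\top} H U + x(t)^{\top} F U + \tfrac{1}{2} x(t)^{\top} Y x(t)$, where $H \succ 0$ because $R \succ 0$ makes the block-diagonal input-weighting term positive definite while the $Q,P \succeq 0$ terms only add a positive semidefinite contribution. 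Substituting the same expression into the polyhedral constraints \eqref{eq:MPCconstr} gives $G U \le w + E\, x(t)$ for suitable $G, w, E$. Dropping the $x(t)$-only term of the cost (which does not affect the minimizer), \eqref{eq:MPC} is therefore equivalent to the strictly convex mpQP $U^{\star}(x(t)) = \argmin_{U} \tfrac12 U^{\top} H U + x(t)^{\top} F U$ subject to $G U \le w + E\, x(t)$.

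Next I would characterize the solution map via the KKT conditions. For a fixed parameter, strict convexity gives a unique minimizer $U^{\star}$ with multiplier $\lambda \ge 0$ satisfying $H U^{\star} + F^{\top} x(t) + G^{\top} \lambda = 0$, primal feasibility, and complementary slackness. Fixing the set $\mathcal{A}$ of constraints active at the optimum and assuming the corresponding rows $G_{\mathcal{A}}$ have full row rank (LICQ), the active equalities $G_{\mathcal{A}} U^{\star} = w_{\mathcal{A}} + E_{\mathcal{A}} x(t)$ together with stationarity form a square linear system whose solution gives $\lambda_{\mathcal{A}}$ and hence $U^{\star}$ as \emph{affine} functions of $x(t)$. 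The set of parameters for which this same active set stays optimal is exactly $\{x : \lambda_{\mathcal{A}}(x) \ge 0,\ G_{\mathcal{N}} U^{\star}(x) \le w_{\mathcal{N}} + E_{\mathcal{N}} x\}$ (dual feasibility plus primal feasibility of the inactive constraints), which is a polyhedron since both $\lambda_{\mathcal{A}}(\cdot)$ and $U^{\star}(\cdot)$ are affine. Ranging over the finitely many possible active sets, these polyhedra have mutually disjoint interiors (by uniqueness of $U^{\star}$) and cover the feasible parameter set, which is itself a polyhedron; this exhibits $U^{\star}(\cdot)$ as PWA. Continuity then follows from uniqueness of the minimizer: since the problem data depend affinely on $x(t)$ and the feasible set varies continuously where nonempty, Berge's maximum theorem gives continuity of $U^{\star}(\cdot)$, which forces the affine pieces to agree on the shared facets of adjacent regions.

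The main obstacle I anticipate is the treatment of degenerate parameters, where LICQ fails, where a weakly active constraint makes the optimal active set non-unique, or where candidate regions overlap on their boundaries: there the square-system argument breaks down and one must argue, e.g., by a small lexicographic perturbation of $w$ (equivalently of $d$ in \eqref{eq:MPCconstr}) or by a limiting argument, that the PWA structure and the continuity conclusion persist. Since handling these cases is precisely the content of the cited multiparametric result, I would invoke it to dispatch them rather than re-deriving the perturbation analysis in detail.
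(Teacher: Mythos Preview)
Your proof sketch is correct and follows the classical route of \cite{Bemporad2002b}: eliminate the dynamics to obtain a strictly convex mpQP in $U$ with parameter $x(t)$, read off affine optimizers from the KKT system on each critical region, and patch regions together via uniqueness/Berge to get continuity. The paper, however, does not prove this lemma at all; it simply \emph{recalls} it as a known structural fact and cites \cite{Bemporad2002b} for the derivation. So there is no discrepancy in approach to report---you have supplied the argument that the paper outsources to the reference, and your handling of degeneracies (perturbation or direct appeal to the cited result) matches how that reference treats them.
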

	
	Then, we generalize the one-step ahead predictor introduced in \cite{DePersis2019} to the multi-step case. To this end, it is important to recall that, thanks to Assumptions~\ref{ass:persistency of excitation}, \ref{ass:long_seq} and the \emph{Fundamental Lemma} \cite{Willems2005}, the following rank condition holds:
	\begin{equation}\label{eq:rank}
		\mbox{rank}\left(\begin{bmatrix}
			X_{0,T-L-1}\\ 
			\hline
			U_{0,L,T-L-1}
		\end{bmatrix}\right)=n+mL.
	\end{equation}
	We can now derive the multi-step data-based predictor as follows.
	\begin{theorem}[Data-based multi-step predictor]\label{thm:OL_predictor}
		Let Assumptions~\ref{ass:persistency of excitation} and \ref{ass:long_seq} hold. Given $x(t)$, the sequence of predicted states $\tilde{X}=\tilde{X}(x(t))$ admits the following data-based representation:
		\begin{equation}\label{eq:OLmulti_step}
			\tilde{X}=\begin{bmatrix}
				\tilde{x}(1)\\
				\vdots\\
				\tilde{x}(L)
			\end{bmatrix}=X_{1,L,T-L} \begin{bmatrix}
				X_{0,T-L-1}\\
				\hline
				U_{0,L,T-L-1}
			\end{bmatrix}^{\dagger}\begin{bmatrix}
				x(t)\\
				U
			\end{bmatrix}.
		\end{equation}
	\end{theorem}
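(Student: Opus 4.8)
The plan is to exhibit a single linear map, built from $(A,B)$, that governs both the recorded data and the MPC prediction model \eqref{eq:MPCmodel}, and then to invert the associated data matrix by means of the rank condition \eqref{eq:rank}. The noise of Assumption~\ref{ass:noisy_outputs} plays no role here, consistently with the theorem invoking only Assumptions~\ref{ass:persistency of excitation} and~\ref{ass:long_seq}.

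First I would write the lifted dynamics explicitly. Iterating \eqref{eq:system} over a length-$L$ window, one obtains, for every window of the dataset,
\[
\begin{bmatrix} x(i+1)\\ \vdots \\ x(i+L)\end{bmatrix}
=\Gamma\, x(i)+\Theta \begin{bmatrix} u(i)\\ \vdots \\ u(i+L-1)\end{bmatrix},
\]
where $\Gamma=\big[\,A^{\top}\ (A^{2})^{\top}\ \cdots\ (A^{L})^{\top}\,\big]^{\top}\in\mathbb{R}^{nL\times n}$ and $\Theta\in\mathbb{R}^{nL\times mL}$ is the block lower-triangular Toeplitz matrix whose $(\ell,k)$ block equals $A^{\ell-k}B$ for $\ell\ge k$ and $0$ otherwise. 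Collecting these identities column by column yields
\[
X_{1,L,T-L}=\begin{bmatrix}\Gamma & \Theta\end{bmatrix}\begin{bmatrix}X_{0,T-L-1}\\ U_{0,L,T-L-1}\end{bmatrix}.
\]

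Second, I would observe that the very same matrix $\begin{bmatrix}\Gamma & \Theta\end{bmatrix}$ is the prediction map of \eqref{eq:MPCmodel}: propagating \eqref{eq:MPCmodel} from $\tilde x(0)=x(t)$ gives $\tilde X=\Gamma\, x(t)+\Theta\, U=\begin{bmatrix}\Gamma & \Theta\end{bmatrix}\begin{bmatrix}x(t)\\ U\end{bmatrix}$. Third — and this is the crux — Assumptions~\ref{ass:persistency of excitation} and~\ref{ass:long_seq} together with the Fundamental Lemma \cite{Willems2005} guarantee the rank condition \eqref{eq:rank}, i.e. the data matrix $H:=\big[\,X_{0,T-L-1}^{\top}\ U_{0,L,T-L-1}^{\top}\,\big]^{\top}$ has full row rank $n+mL$; hence $HH^{\dagger}=I_{n+mL}$, so $H^{\dagger}$ is a right inverse of $H$. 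Combining the three observations,
\[
X_{1,L,T-L}\,H^{\dagger}\begin{bmatrix}x(t)\\ U\end{bmatrix}
=\begin{bmatrix}\Gamma & \Theta\end{bmatrix}HH^{\dagger}\begin{bmatrix}x(t)\\ U\end{bmatrix}
=\begin{bmatrix}\Gamma & \Theta\end{bmatrix}\begin{bmatrix}x(t)\\ U\end{bmatrix}=\tilde X,
\]
which is exactly \eqref{eq:OLmulti_step}. Equivalently, $X_{1,L,T-L}H^{\dagger}$ is an \emph{exact} data-based reconstruction of the lifted operator $\begin{bmatrix}\Gamma & \Theta\end{bmatrix}$.

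Once the lifted matrices are in place the argument is a short linear-algebra manipulation, so there is no deep obstacle; the only point requiring care is justifying that the pseudo-inverse really acts as a true right inverse, which is precisely where full row rank of $H$ — hence persistency of excitation and the dataset-length bound of Assumption~\ref{ass:long_seq} — enters in an essential way. If $H$ were rank deficient, $HH^{\dagger}$ would be an orthogonal projector different from the identity and the data-based predictor would fail to reproduce the model-based one in general; moreover $[x(t)^\top\ U^\top]^\top\in\mathbb{R}^{n+mL}$ is an \emph{arbitrary} query vector, so it is full row rank (not merely column-span membership) that is needed for \eqref{eq:OLmulti_step} to hold for every state and candidate input sequence.
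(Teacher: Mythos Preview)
Your proof is correct and follows essentially the same approach as the paper: both factor $X_{1,L,T-L}=[\Gamma\ \Theta]\,H$ through the model-based lifted operator and then exploit the full row rank of $H$ guaranteed by \eqref{eq:rank}. The paper phrases the last step via the Rouch\'e--Capelli parameterization $\alpha=H^{\dagger}v+\Pi_{H}^{\perp}w$ and the annihilation $X_{1,L,T-L}\Pi_{H}^{\perp}=0$, whereas you use the equivalent and slightly more direct observation $HH^{\dagger}=I$; the underlying argument is the same.
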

	\begin{proof}
		The proof follows the steps of the one in \cite[Appendix B]{DePersis2019}. Specifically, let $v \in \mathbb{R}^{n+mL}$ and $S \in \mathbb{R}^{n+mL \times T}$ be respectively defined as:
		\begin{equation*}
			v:=\begin{bmatrix}
				x(t)\\
				U
			\end{bmatrix},\quad S:=\begin{bmatrix}
				X_{0,T-L-1}\\
				\hline
				U_{0,L,T-L-1}
			\end{bmatrix}, 
		\end{equation*}	
		with $S$ being full row rank, \emph{i.e.,} $\mbox{rank}(S)=n+mL$. By the Rouch\'e-Capelli theorem, for any given $v$, the equality
		\begin{equation*}
			v=S\alpha,
		\end{equation*} 
		admits infinite solutions $\alpha$ of the form
		\begin{equation}\label{eq:alpha}
			\alpha=S^{\dagger} v+ \Pi_{S}^{\perp}w, \forall w \in \mathbb{R}^{T},
		\end{equation}
		with $\Pi_{S}^{\perp}=(I-S^{\dagger}S)$ being the orthogonal projector onto the kernel of $S$. Meanwhile, based on the model in \eqref{eq:MPCmodel}, the predicted state sequence can be defined as a function of $x(t)$ and $U$ as:
		\begin{equation}\label{eq:MBpredictor}
			\tilde{X}=\underbrace{\begin{bmatrix}
					A\\
					\vdots\\
					A^{L\!-\!1}
			\end{bmatrix}}_{\xi}x(t)+\underbrace{\begin{bmatrix}
					B & 0 &  \cdots & 0\\
					AB & B &  \cdots & 0\\
					\vdots & \vdots  &\ddots  & \vdots\\
					A^{L\!-\!1}B & A^{L\!-\!2}B &  \cdots & B
			\end{bmatrix}}_{\Gamma}U.
		\end{equation}
		In turn, such a sequence can be recast as a function of $\alpha$, \emph{i.e.,}
		\begin{equation*}
			\tilde{X}=\begin{bmatrix}
				\xi & \Gamma
			\end{bmatrix}S\alpha=X_{1,L,T-L}\alpha.
		\end{equation*}
		where the second equality straightforwardly follows from \eqref{eq:MBpredictor} and the definition of $S$. By replacing $\alpha$ with \eqref{eq:alpha}, we then obtain
		\begin{equation*}
			\tilde{X}=X_{1,L,T-L}\left(S^{\dagger}v+\Pi_{S}^{\perp}w\right)=X_{1,L,T-L}S^{\dagger}v,
		\end{equation*}
		as $X_{1,L,T-L}\Pi_{S}^{\perp}=\begin{bmatrix} \xi & \Gamma \end{bmatrix}S\Pi_{S}^{\perp}=0$, based on the definition of the projector.
	\end{proof}
	
	This preliminary result allows us to exploit priors on the solution of \eqref{eq:MPC} for the definition of the DDPC problem. In fact, according to Lemma \ref{lemma_sol}, we can parameterize the control sequence $U$ as:
	\begin{equation}\label{eq:param_input}
		U\!=\!\begin{cases}
			K_{1}x(t)\!+\!f_{1},~ \mbox{ if } H_{1}x(t) \leq \ell_{1},\\
			\vdots\\
			K_{M}x(t)\!+\!f_{M},~ \mbox{ if } H_{M}x(t) \leq \ell_{M},
		\end{cases}
	\end{equation}
	where $K_{i} \in \mathbb{R}^{mL \times n}$ and $f_{i} \in \mathbb{R}^{mL}$ are the (unknown) feedback and affine gains characterizing the control law, $\{H_{i},\ell_{i}\}_{i=1}^{M}$ dictates the associated polyhedral partition, for $i=1,\ldots,M$, and the amounts of modes $M$ is dictated by the number of possible combinations of active constraints. Therefore, for a given state $x(t)$, the input sequence is the affine function
	\begin{equation}\label{eq:single_input}
		U(x(t))=Kx(t)+f,
	\end{equation}
	with $K=K_{s(x(t))}$ and $f=f_{s(x(t))}$ denoting the gains associated to the active control law, and with
	\begin{equation*}
		s(x(t))=i \quad \iff \quad H_{i}x(t) \leq \ell_{i}, i \in \{1,\ldots,M\}.
	\end{equation*}
	
	Based on this parameterization, we can compute a data-based closed-loop characterization of the predictor in \eqref{eq:MPCmodel}, as outlined in the following theorem.  
	\begin{theorem}[Closed-loop multi-step predictor]\label{thm:CL_predictor}
		Let Assumptions~\ref{ass:persistency of excitation} and \ref{ass:long_seq} hold. Given $x(t)$, the sequence of predicted states $\tilde{X}=\tilde{X}(x(t))$ can be equivalently expressed as: 
		\begin{subequations}\label{eq:CLpred}
			\begin{equation}
				\tilde{X}=X_{1,L,T-L}\left(G_{K}x(t)+G_{f}\right),
			\end{equation}
			with $G_{K} \in \mathbb{R}^{T-L \times n}$ and $G_{f} \in \mathbb{R}^{T-L}$ satisfying:
			\begin{align}
				&\begin{bmatrix}
					I\\
					K
				\end{bmatrix}=\begin{bmatrix}
					X_{0,T-L-1}\\
					\hline
					U_{0,L,T-L-1}
				\end{bmatrix} G_{K} \label{eq:def_Gk}\\
				& \begin{bmatrix}
					0\\
					f
				\end{bmatrix}=\begin{bmatrix}
					X_{0,T-L-1}\\
					\hline
					U_{0,L,T-L-1}
				\end{bmatrix} G_{f} \label{eq:def_Gf}
			\end{align}
		\end{subequations}
		where $K$ and $f$ characterize the local control law \eqref{eq:single_input}. Accordingly, the input sequence is given by:
		\begin{equation}\label{eq:input_sequence}
			U=U_{0,L,T-L-1}(G_{K}x(t)+G_{f}).
		\end{equation} 
	\end{theorem}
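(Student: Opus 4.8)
The plan is to reduce the statement to Theorem~\ref{thm:OL_predictor} by substituting the affine parameterization \eqref{eq:single_input} into the open-loop predictor \eqref{eq:OLmulti_step}. Write $S:=\smallmat{X_{0,T-L-1}\\ U_{0,L,T-L-1}}$, which has full row rank $n+mL$ by the rank condition \eqref{eq:rank}. Under \eqref{eq:single_input} the stacked vector of initial condition and input sequence splits as
\begin{equation*}
	\begin{bmatrix} x(t)\\ U \end{bmatrix}=\begin{bmatrix} I\\ K \end{bmatrix}x(t)+\begin{bmatrix} 0\\ f \end{bmatrix}.
\end{equation*}
Since $S$ has full row rank its columns span $\mathbb{R}^{n+mL}$, so by the Rouch\'e-Capelli theorem each of the linear systems \eqref{eq:def_Gk} and \eqref{eq:def_Gf} is consistent and admits (in general infinitely many) solutions $G_{K}$ and $G_{f}$; one admissible choice is $G_{K}=S^{\dagger}\smallmat{I\\ K}$ and $G_{f}=S^{\dagger}\smallmat{0\\ f}$.

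Next I would substitute the splitting above into \eqref{eq:OLmulti_step}, which yields
\begin{equation*}
	\tilde{X}=X_{1,L,T-L}S^{\dagger}\begin{bmatrix} I\\ K \end{bmatrix}x(t)+X_{1,L,T-L}S^{\dagger}\begin{bmatrix} 0\\ f \end{bmatrix}.
\end{equation*}
The step I expect to require the most care is replacing $S^{\dagger}\smallmat{I\\ K}$ and $S^{\dagger}\smallmat{0\\ f}$ by \emph{arbitrary} solutions $G_{K}$, $G_{f}$ of \eqref{eq:def_Gk}--\eqref{eq:def_Gf}, so as to show that $\tilde{X}$ does not depend on the particular solution picked. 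For any such $G_{K}$, left-multiplying the identity $\smallmat{I\\ K}=SG_{K}$ by $S^{\dagger}$ gives $S^{\dagger}\smallmat{I\\ K}=S^{\dagger}SG_{K}=(I-\Pi_{S}^{\perp})G_{K}$, and, since the proof of Theorem~\ref{thm:OL_predictor} established $X_{1,L,T-L}\Pi_{S}^{\perp}=0$, it follows that $X_{1,L,T-L}S^{\dagger}\smallmat{I\\ K}=X_{1,L,T-L}G_{K}$; the same argument applied to \eqref{eq:def_Gf} gives $X_{1,L,T-L}S^{\dagger}\smallmat{0\\ f}=X_{1,L,T-L}G_{f}$. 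Substituting these two identities back into the display above proves $\tilde{X}=X_{1,L,T-L}\left(G_{K}x(t)+G_{f}\right)$, i.e.\ \eqref{eq:CLpred}.

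Finally, the input expression \eqref{eq:input_sequence} would follow by reading off the bottom block rows of \eqref{eq:def_Gk} and \eqref{eq:def_Gf}, namely $U_{0,L,T-L-1}G_{K}=K$ and $U_{0,L,T-L-1}G_{f}=f$, so that $U_{0,L,T-L-1}\left(G_{K}x(t)+G_{f}\right)=Kx(t)+f=U$ by \eqref{eq:single_input}. The only genuine obstacle is therefore the non-uniqueness of $G_{K}$ and $G_{f}$: it is resolved for $\tilde{X}$ by the projector identity $X_{1,L,T-L}\Pi_{S}^{\perp}=0$ inherited from Theorem~\ref{thm:OL_predictor}, and for $U$ directly by \eqref{eq:def_Gk}--\eqref{eq:def_Gf} themselves, whose bottom-block relations pin down $U_{0,L,T-L-1}G_{K}$ and $U_{0,L,T-L-1}G_{f}$ regardless of the selected solution.
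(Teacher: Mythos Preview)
Your proposal is correct and follows essentially the same route as the paper: existence of $G_{K},G_{f}$ via Rouch\'e--Capelli from the full row rank of $S$, substitution of \eqref{eq:single_input} into the open-loop predictor \eqref{eq:OLmulti_step}, and then reading off \eqref{eq:CLpred} and \eqref{eq:input_sequence}. The paper's proof simply states that the conclusion ``straightforwardly follows'' once these two ingredients are combined; your version is more explicit in that you justify why \emph{any} solution of \eqref{eq:def_Gk}--\eqref{eq:def_Gf} (not only the pseudoinverse one) yields the same $\tilde{X}$, by invoking the projector identity $X_{1,L,T-L}\Pi_{S}^{\perp}=0$ from the proof of Theorem~\ref{thm:OL_predictor}. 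This extra care is a genuine improvement, since the theorem is stated for arbitrary $G_{K},G_{f}$ satisfying the defining equations.
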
 
	\begin{proof}	
		For the Rouch\'e-Capelli theorem, there exists an $T \times n$ matrix $G_{k}$ and a $T$-dimensional vector $G_{f}$ such that \eqref{eq:def_Gk}-\eqref{eq:def_Gf} hold. Meanwhile, replacing \eqref{eq:single_input} into the open-loop predictor in \eqref{eq:OLmulti_step}, the predicted state sequence can be represented as:
		\begin{equation*}
			\tilde{X}=X_{1,L,T-L}\begin{bmatrix}
				X_{0,T-L-1}\\
				\hline
				U_{0,L,T-L-1}
			\end{bmatrix}^{\dagger}\left(\begin{bmatrix}
				I\\
				K
			\end{bmatrix}x(t)+\begin{bmatrix}
				0\\
				f
			\end{bmatrix}\right).
		\end{equation*}
		By combining these two results, the closed-loop representation in \eqref{eq:CLpred} and the equivalent definition of the input sequence in \eqref{eq:input_sequence} straightforwardly follow. 
	\end{proof}
	
	By leveraging on \eqref{eq:CLpred}-\eqref{eq:input_sequence}, we can now equivalently recast the predictive control task in \eqref{eq:MPC} as an optimization problem with the closed-loop matrices $G_{K}, G_{f}$ being the decision variables, as follows\footnote{Notice that the term in the cost that depends only on $x(t)$ has been neglected. This can be done without loss of generality, as the optimal solution does not change.}:
	\begin{subequations}\label{eq:DDPC}
		\begin{align}
			&\underset{G_{K},G_{f}}{\mbox{minimize}}~~~\tilde{X}^{\top}\mathcal{Q}\tilde{X}\!+\!U^{\top}\mathcal{R}U \label{eq:DDPCcost}\\
			&\qquad \mbox{s.t.}~~~ \tilde{X}=X_{1,L,T-L}\left(G_{K}x(t)+G_{f}\right) \label{eq:DDPCmodel1}\\
			& \qquad \qquad~ U=U_{0,L,T-L-1}(G_{K}x(t)+G_{f}), \label{eq:DDPCmodel2}\\
			& \qquad \qquad~ \begin{bmatrix}
				C_{x} & 0\\
				0 & \mathcal{C}_{x}
			\end{bmatrix}\begin{bmatrix}x(t)\\\tilde{X}\end{bmatrix}+\mathcal{C}_{u}U \leq \mathcal{D},\label{eq:DDPCconstr}\\
			& \qquad \qquad~ X_{0,T-L-1}G_{K}x(t)=x(t), \label{eq:DDPCbuild1}\\
			& \qquad \qquad~ X_{0,T-L-1}G_{f}=0, \label{eq:DDPCbuild2}
		\end{align}
	\end{subequations}
	In \eqref{eq:DDPC}, $\mathcal{Q}\!=\!\mbox{diag}\left([Q,\cdots,Q,P]\right)$, $\mathcal{R}\!=\!\mbox{diag}\left([R,\cdots,R]\right)$, $\mathcal{C}_{u}\!=\!\mbox{diag}\left([C_{u},\cdots,C_{u}]\right)$ and
	\begin{align*}
		& \mathcal{C}_{x}\!=\!\!\begin{bmatrix}
			C_{x} & 0 & \cdots & 0 & 0\\
			0 & C_{x} & \cdots & 0 & 0\\
			\vdots & \vdots & \ddots & \vdots & \vdots\\
			0 &  0 & \cdots & C_{x} & 0
		\end{bmatrix}\!\!,~~~~\mathcal{D}=\begin{bmatrix}
			d\\
			\vdots\\
			d
		\end{bmatrix}\!\!.
	\end{align*}
	Note that, the last constraints (see \eqref{eq:DDPCbuild1}-\eqref{eq:DDPCbuild2}) are introduced for the problem to be consistent with the closed-loop representation in \eqref{eq:CLpred}. 
	
	This shift from an open-loop predictor to its closed-loop counterpart allows us to directly learn the control law from data, and avoid any system identification step.	
	
	\begin{remark}\label{remark:noisefree}
		Both the equivalences in \eqref{eq:OLmulti_step} and \eqref{eq:CLpred} exactly hold in a \emph{noiseless} setting only. As such, problem~\eqref{eq:DDPC} is equivalent to \eqref{eq:MPC} only when the available batch of data is noise-free. 
	\end{remark}
	
	\section{Learning  Explicit DDPC} \label{sec:learning}
	To derive its explicit solution, the problem in \eqref{eq:DDPC} is manipulated to obtain a \emph{multi-parametric Quadratic Program} (mp-QP). As a preliminary step, we condense the unknowns of \eqref{eq:DDPC} into a single variable:
	\begin{equation}\label{eq:new_unknown}
		G(t)=\begin{bmatrix}
			G_{K}x(t)\\
			G_{f}
		\end{bmatrix} \in \mathbb{R}^{2(T-L)}.
	\end{equation}
	Accordingly, we can recast \eqref{eq:DDPC} as the following mp-QP
	\begin{subequations}\label{eq:mpQP_ddpc}
		\begin{align}
			&\underset{G(t)}{\mbox{minimize}}~~~(G(t))^{\top}\mathcal{H}_{d}G(t) \label{eq:mpQPcost}\\
			&\qquad \mbox{s.t.}~\quad \Xi_{d} G(t)+\Psi x(t)\leq \mathcal{D}, \label{eq:mpQPconstr}\\
			& \qquad \qquad~~ \Theta_{d} G(t)=\begin{bmatrix}
				x(t)\\0
			\end{bmatrix}, \label{eq:mpQPbuild}
		\end{align}
	\end{subequations}
	where
	\begin{subequations}
		\begin{align}\label{eq:cost_matrix}
			& \mathcal{H}_{d}=\mathcal{X}^{\top}\mathcal{Q}\mathcal{X}+\mathcal{V}^{\top}\mathcal{R}\mathcal{V},\\
			& \Xi_{d}=\begin{bmatrix}
				C_{x} & 0\\
				0 &\mathcal{C}_{x}
			\end{bmatrix}\begin{bmatrix}
				0\\
				\mathcal{X}
			\end{bmatrix}+\mathcal{C}_{u}\mathcal{V},~~\Psi\!=\!\begin{bmatrix}
				C_{x} & 0\\
				0 &\mathcal{C}_{x}
			\end{bmatrix}\begin{bmatrix}
				I\\
				0
			\end{bmatrix}\!\!,\\
			& \Theta_{d}\!=\!\mbox{diag}\left([X_{0,T\!-L\!-1},X_{0,T\!-L\!-1}]\right),,
		\end{align}
		and
		\begin{align*}
			&\mathcal{X}\!=\!\begin{bmatrix}
				X_{1,L,T\!-L} & X_{1,L,T\!-L}
			\end{bmatrix},\\
			& \mathcal{V}\!=\!\begin{bmatrix}U_{0,L,T\!-L\!-1} & U_{0,L,T\!-L\!-1}\end{bmatrix}.
		\end{align*}   
	\end{subequations}
	By focusing on $\mathcal{H}_{d}$ in \eqref{eq:mpQPcost}, it can be proven that this weighting matrix satisfies the following lemma.
	\begin{lemma}[Features of $\mathcal{H}_{d}$]
		Under Assumptions~\ref{ass:persistency of excitation} and \ref{ass:long_seq}, $\mathcal{H}_{d}$ is positive semi-definite.
	\end{lemma}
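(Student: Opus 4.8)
The plan is to use the factored form of $\mathcal{H}_{d}$ recorded in \eqref{eq:cost_matrix}, namely $\mathcal{H}_{d}=\mathcal{X}^{\top}\mathcal{Q}\mathcal{X}+\mathcal{V}^{\top}\mathcal{R}\mathcal{V}$, and to verify nonnegativity of the associated quadratic form directly. First I would note that $\mathcal{Q}=\mbox{diag}([Q,\ldots,Q,P])$ is positive semi-definite, being block diagonal with blocks $Q\succeq 0$ and $P\succeq 0$ by the standing assumptions on the MPC weights, and that $\mathcal{R}=\mbox{diag}([R,\ldots,R])$ is positive definite (hence, a fortiori, positive semi-definite), inheriting $R\succ 0$ block by block.

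Then, for an arbitrary $z\in\mathbb{R}^{2(T-L)}$ I would write $z^{\top}\mathcal{H}_{d}z=(\mathcal{X}z)^{\top}\mathcal{Q}(\mathcal{X}z)+(\mathcal{V}z)^{\top}\mathcal{R}(\mathcal{V}z)=\|\mathcal{X}z\|_{\mathcal{Q}}^{2}+\|\mathcal{V}z\|_{\mathcal{R}}^{2}$. Each term is a quadratic form of a vector ($\mathcal{X}z$, resp.\ $\mathcal{V}z$) weighted by a positive semi-definite matrix, so each is nonnegative, and therefore so is the sum; since $z$ is arbitrary, $\mathcal{H}_{d}\succeq 0$. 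Equivalently, $\mathcal{X}^{\top}\mathcal{Q}\mathcal{X}$ and $\mathcal{V}^{\top}\mathcal{R}\mathcal{V}$ are congruence transforms of positive semi-definite matrices, hence positive semi-definite, and the PSD cone is closed under addition.

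Assumptions~\ref{ass:persistency of excitation} and \ref{ass:long_seq} are needed only to ensure that the Hankel-type data matrices entering $\mathcal{X}$ and $\mathcal{V}$ are well defined and that the mp-QP \eqref{eq:mpQP_ddpc} is a faithful restatement of \eqref{eq:DDPC}; they play no role in the positive-semi-definiteness argument itself, so I do not expect any genuine obstacle. The one point to be careful about is that $\mathcal{H}_{d}$ is \emph{not} positive definite: since $\mathcal{X}$ and $\mathcal{V}$ are each built from two identical blocks, every $z=\begin{bmatrix}a^{\top}&-a^{\top}\end{bmatrix}^{\top}$ lies in $\ker\mathcal{X}\cap\ker\mathcal{V}$, and hence in $\ker\mathcal{H}_{d}$. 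Thus the lemma should claim only semi-definiteness, and any subsequent step relying on strict convexity of \eqref{eq:mpQP_ddpc} must account for this degeneracy (e.g.\ through the regularization discussed afterwards) rather than assume it away.
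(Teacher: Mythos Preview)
Your argument is correct. The quadratic-form computation $z^{\top}\mathcal{H}_{d}z=\|\mathcal{X}z\|_{\mathcal{Q}}^{2}+\|\mathcal{V}z\|_{\mathcal{R}}^{2}\geq 0$ immediately gives $\mathcal{H}_{d}\succeq 0$, and your observation that the repeated-block structure of $\mathcal{X}$ and $\mathcal{V}$ forces every vector of the form $\begin{bmatrix}a^{\top}&-a^{\top}\end{bmatrix}^{\top}$ into $\ker\mathcal{H}_{d}$ cleanly rules out strict definiteness.

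The paper takes a different, terser route: it invokes Assumptions~\ref{ass:persistency of excitation} and~\ref{ass:long_seq} to conclude, by dimension counting, that $\mathcal{V}^{\top}\in\mathbb{R}^{2(T-L)\times mL}$ cannot be full row rank (since $2(T-L)>mL$ under those assumptions), and from this infers that $\mathcal{H}_{d}$ is only semi-definite. Compared with your approach, this has two drawbacks. First, rank deficiency of $\mathcal{V}$ alone shows only that $\mathcal{V}^{\top}\mathcal{R}\mathcal{V}$ is singular; one still needs the kernels of $\mathcal{X}$ and $\mathcal{V}$ to intersect nontrivially to conclude the \emph{sum} is singular, a step the paper leaves implicit but which your $[a;-a]$ argument makes explicit. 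Second, as you point out, the persistency/length assumptions are not actually needed for the semi-definiteness claim itself: the structural repetition in $\mathcal{X}$ and $\mathcal{V}$ already furnishes a nontrivial kernel regardless of $T$. So your proof is both more self-contained and more informative about \emph{why} the regularization in \eqref{eq:reg_cost} is unavoidable.
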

	\begin{proof}
		This is a direct consequence of Assumptions~\ref{ass:persistency of excitation} and \ref{ass:long_seq}, for which $\mathcal{V}^{\top} \in \mathbb{R}^{2(T-L) \times mL}$ is not full row rank.
	\end{proof}
	
	As the cost should be strictly convex for a unique explicit solution to be retrieved, this feature of $\mathcal{H}_{d}$ prevents us from deriving the explicit law. To overcome this limitation, we introduce a regularization term in the cost of \eqref{eq:mpQP_ddpc}, thus replacing the weight $\mathcal{H}_{d}$ with:
	\begin{equation}\label{eq:reg_cost}
		\mathcal{H}_{d}^{{\gamma}}=\frac{1}{2}\left(\mathcal{H}_{d}+\gamma I\right),
	\end{equation}
	where $\gamma>0$ is an hyper-parameter to be tuned\footnote{The cost has been normalized to ease the subsequent derivations.}. The data-driven control problem then corresponds to the \emph{regularized} mp-QP:
	\begin{subequations}\label{eq:regmpQP_ddpc}
		\begin{align}
			&\underset{G(t)}{\mbox{minimize}}~~~(G(t))^{\top}\mathcal{H}_{d}^{\gamma}G(t) \label{eq:regmpQPcost}\\
			&\qquad \mbox{s.t.}~\quad \Xi_{d} G(t)+\Psi x(t)\leq \mathcal{D}, \label{eq:regmpQPconstr}\\
			& \qquad \qquad~~ \Theta_{d} G(t)=\begin{bmatrix}
				x(t)\\0 \end{bmatrix}. \label{eq:regmpQPbuild}
		\end{align}
	\end{subequations}
	
	\subsection{Derivation of the explicit DDPC law}
	The introduction of the regularizer in \eqref{eq:regmpQP_ddpc} allows us to derive the explicit DDPC law through the manipulation of the  \emph{Karush-Kuhn-Tucker} (KKT) conditions associated with the new DDPC problem. To ease the computations, let us consider the following further assumption. 
	\begin{assumption}[Non-degenerate constraints]\label{ass:nondeg}
		The active constraints of \eqref{eq:regmpQP_ddpc} are linearly independent.
	\end{assumption}   
	Based on this assumption, we now follow the same steps used to derive the explicit model-based predictive control law in \cite{Bemporad2002b}. 
	
	The KKT conditions for the regularized DDPC problem in \eqref{eq:regmpQP_ddpc} are:
	\begin{subequations}\label{eq:KKT}
		\begin{align}
			& \mathcal{H}_{d}^{\gamma} G(t)+\Xi_{d}^{\top}\lambda+\Theta_{d}^{\top}\mu=0, \label{eq:KKT1}\\
			& \lambda^{\top}(\Xi_{d} G(t)+\Psi x(t)-\mathcal{D})=0, \label{eq:KKT2}\\
			& \lambda \geq 0,\label{eq:KKT3}\\
			& \Xi_{d} G(t)+\Psi x(t)-\mathcal{D} \leq 0, \label{eq:KKT4}\\
			& \Theta_{d}G(t)-\begin{bmatrix} x(t)\\ 0
			\end{bmatrix}=0, \label{eq:KKT5}
		\end{align}
		where $\lambda$ and $\mu$ are the Lagrange multipliers associated with inequality and equality constraints in \eqref{eq:regmpQPconstr} and \eqref{eq:regmpQPbuild}, respectively. 
	\end{subequations}
	Let us focus on the $i$-th set of active constraints only, distinguishing between the Lagrange multipliers associated with a given active and inactive inequality constraints. We respectively denote them as $\tilde{\lambda}_{i}$ and $\bar{\lambda}_{i}$. It is straightforward to notice that the combination of \eqref{eq:KKT2} and \eqref{eq:KKT3} leads to the following condition on $\bar{\lambda}_{i}$: 
	\begin{equation*}
		\bar{\lambda}_{i}=0.
	\end{equation*}
	By merging \eqref{eq:KKT2} and \eqref{eq:KKT5} for the $i$-th set of active constraints, it is also straightforward to show that the optimal solution $G_{i}(t)$ satisfies 
	\begin{equation}\label{eq:ActiveCond}
		\Phi_{d,i}G_{i}(t)-
		\tilde{S}_{i}x(t)-\tilde{W}_{i}=0,
	\end{equation}
	where 
	\begin{equation*}
		\Phi_{d,i}=\begin{bmatrix}
			\tilde{\Xi}_{d,i}\\
			\Theta_{d}
		\end{bmatrix},~~ \tilde{S}_{i}=\begin{bmatrix}
			-\tilde{\Psi}_{i}\\
			I\\
			0
		\end{bmatrix},~~\tilde{W}_{i}=\begin{bmatrix}
			\tilde{\mathcal{D}}_{i}\\0	
		\end{bmatrix}
	\end{equation*}
	and $\tilde{\Xi}_{d,i}$, $\tilde{\Psi}_{i}$ and $\tilde{\mathcal{D}}_{i}$ are the rows of $\Xi_{d}$, $\Psi$ and $\mathcal{D}$ coupled with the considered set active constraints. By leveraging on \eqref{eq:KKT1}, we can now express our optimization variable $G_{i}(t)$ as a function of the Lagrange multipliers, \emph{i.e.,}
	\begin{equation}\label{eq:GasLambda}
		G_{i}(t)=-(\mathcal{H}_{d}^{\gamma})^{-1}\underbrace{\begin{bmatrix}
				\tilde{\Xi}_{d,i}^{\top} & \Theta_{d}^{\top}
		\end{bmatrix}}_{\Phi_{d,i}^{\top}}\tilde{\Lambda}_{i},
	\end{equation} 
	where 
	\begin{equation*}
		\tilde{\Lambda}_{i}=\begin{bmatrix}
			\tilde{\lambda}_{i}\\
			\mu
		\end{bmatrix}.
	\end{equation*} 
	We can now replace the latter into \eqref{eq:ActiveCond} to obtain an explicit expression of the Lagrange multipliers as functions of the matrices characterizing \eqref{eq:regmpQP_ddpc}:
	\begin{equation}\label{eq:explicitLambda}
		\tilde{\Lambda}_{i}=-\underbrace{\left[\Phi_{d,i}\left(\mathcal{H}_{d}^{\gamma}\right)^{-1}\Phi_{d,i}^{\top}\right]^{-1}}_{\Upsilon_{d,i}}\left(\tilde{S}_{i} x(t)\!+\!\tilde{W}_{i}\right).
	\end{equation} 
	In turn, this allows us to explicitly retrieve $G_{i}(t)$ as:
	\begin{equation}\label{eq:explicitG}
		G_{i}(t)\!=\!\left(\mathcal{H}_{d}^{\gamma}\right)^{-1}\!\Phi_{d,i}^{\top}\Upsilon_{d,i}\left(\tilde{S}_{i}x(t)\!+\!\tilde{W}_{i}\right),
	\end{equation}
	and the associated optimal input sequence as
	\begin{equation}\label{eq:optimal_seq}
		U_{i}(x(t))\!=\!\mathcal{V}\left(\mathcal{H}_{d}^{\gamma}\right)^{-1}\!\Phi_{d,i}^{\top}\Upsilon_{d,i}\left(\tilde{S}_{i}x(t)\!+\!\tilde{W}_{i}\right).
	\end{equation}
	Thus, the input to be fed to the system when the $i$-th set of constraints is active is defined as:
	\begin{equation}\label{eq:optimal_input}
		u_{i}(x(t))\!=\!\left[U_{i}(x(t))\right]_{1:m}\!.
	\end{equation}
	Through \eqref{eq:KKT3} and \eqref{eq:KKT4}, we can finally define the polyhedral region associated with the considered combination of active constraints, which is dictated by the following inequalities:
	\begin{subequations}\label{eq:polyregion}
		\begin{align}
			& \Upsilon_{d,i}\left(\tilde{S}_{i}x(t)\!+\!\tilde{W}_{i}\right)\leq 0,\\
			&\Xi_{d}\!\left(\mathcal{H}_{d}^{\gamma}\right)^{-1}\!\Phi_{d,i}^{\top}\Upsilon_{d,i}\!\left(\tilde{S}_{i}x(t)\!+\!\!\tilde{W}_{i}\right)\!\!+\!\Psi x(t)\!-\!\mathcal{D}\!\leq 0. 
		\end{align} 
	\end{subequations}
	The complete data-driven expression for \eqref{eq:param_input} is then straightforwardly obtained by following the above steps for all possible combinations of the active constraints. This operation ultimately yields an optimal input sequence $U(x(t))$ of the form:
	\begin{equation}\label{eq:optimalInput}
		U(x(t))=\begin{cases}
			U_{1}(x(t)), \mbox{ if } \mathcal{F}_{d,1}x(t) \leq \mathcal{E}_{d,1},\\
			\vdots\\
			U_{M}(x(t)), \mbox{ if } \mathcal{F}_{d,M}x(t) \leq \mathcal{E}_{d,M},
		\end{cases}
	\end{equation}
	where $M$ is given by the number of possible combinations of active constraints, $U_{i}$ corresponds to \eqref{eq:optimal_seq}, for all $i \in \{1,\ldots,M\}$, while $\{\mathcal{F}_{d,i},\mathcal{E}_{d,i}\}_{i=1}^{M}$ can be easily obtained from \eqref{eq:polyregion}. Consequently, the input to be fed to $\mathcal{S}$ starting from $x(t)$ can be retrieved by evaluating the PWA law
	\begin{equation}\label{eq:PWAlaw}
		u(x(t))=\begin{cases}
			u_{1}(x(t)), \mbox{ if } \mathcal{F}_{d,1}x(t) \leq \mathcal{E}_{d,1},\\
			\vdots \\
			u_{M}(x(t)), \mbox{ if } \mathcal{F}_{d,M}x(t) \leq \mathcal{E}_{d,M},\\
		\end{cases}
	\end{equation}
	with $u_{i}(x(t))$ given by \eqref{eq:optimal_input}, for $i=1,\ldots,M$.
	\begin{remark}[On Assumption \ref{ass:nondeg}]
		Although introduced to ease computations, we remark that Assumption~\ref{ass:non_degenrate} is not restrictive. Indeed, degenerate cases can be straightforwardly handled via existing approaches, \emph{e.g.,} see \cite{Bemporad2002b}.  		
	\end{remark}
	\begin{remark}[Data-driven and model-based]\label{remark:equivalence}
		Within a noiseless setting, the results in Theorem~\ref{thm:CL_predictor} and the one-to-one correspondence between the chosen parameterization of the control law in \eqref{eq:param_input} and its model-based counterpart guarantee the equivalence between \eqref{eq:MPC} and \eqref{eq:DDPC}. Therefore, when there is no noise, the data-driven explicit controller coincides with the E-MPC law as $\gamma \rightarrow 0$. 
	\end{remark}
	
	\subsection{Implementing Explicit DDPC}
	Based on the available batch of data and the features of the considered predictive control problem, the explicit DDPC law can be completely retrieved \textit{offline} from the available measurements, as summarized in Algorithm~\ref{algo1}.
	
	\begin{algorithm}[!tb]
		\caption{Offline construction of the explicit law} \label{algo1}
		~\textbf{Input}: Dataset $\mathcal{D}_{T}$; penalties $Q, P \succeq 0$; $R \succ 0$; horizon $N\!>\!0$; constraint matrices $C_{x},C_{u},d$; regularization parameter $\gamma>0$.
		\vspace*{.1cm}\hrule\vspace*{.1cm}
		\begin{enumerate}[label=\arabic*., ref=\theenumi{}]
			\item\label{step:1} \textbf{Construct} the data-based matrices $X_{1,L,T-L}$, $U_{0,L,T-L-1}$, $X_{0,T-L-1}$.
			\item\label{step:2} \textbf{Build} $\mathcal{H}_{d}^{\gamma}$, $\Xi_{d}$, $\Psi$, $\Theta_{d}$ in \eqref{eq:cost_matrix} based on the cost and constraints of the DDPC problem.
			\item\label{step:3} \textbf{Find} all possible combinations of active constraints. 
			\item\label{step:4} \textbf{For each} combination, \textbf{isolate} the matrices $\tilde{\Xi}_{d}$, $\tilde{W}$ and $\tilde{S}$ characterizing \eqref{eq:ActiveCond}
			\item\label{step:5} \textbf{If not} all rows of $\tilde{\Xi}_{d}$ are \textbf{linearly independent}, \textbf{handle} the degeneracy, \emph{e.g.,} as in \cite{Bemporad2002b}.		
			\item\label{step:6} \textbf{Find} the PWA explicit law by retrieving \eqref{eq:optimal_seq}-\eqref{eq:polyregion} \textbf{for all} possible combinations of active constraints.
			\item\label{step:7} \textbf{Merge} polyhedral regions as in \cite{Bemporad2001}.
			\item\label{step:8} \textbf{Extract} the first component of the optimal input sequence $U(x(t))$.
		\end{enumerate}
		\vspace*{.1cm}\hrule\vspace*{.1cm}
		~\textbf{Output}: Optimal input $u(x(t))$.
	\end{algorithm} 
	
	Given the data, one has to initially construct the Hankel matrices needed to build the DDPC problem (see steps~\ref{step:1}-\ref{step:2}). Once all the possible combinations of active constraints have been detected at step~\ref{step:3} and degenerate scenarios have been handled (see step~\ref{step:5}), at step~\ref{step:6} the local controllers and the associated polyhedral regions are retrieved according to \eqref{eq:optimal_seq}-\eqref{eq:polyregion}. Lastly, at step~\ref{step:7}, the optimal control sequence is simplified, by merging polyhedral regions whenever possible. After this step, the explicit optimal input can simply be retrieved by extracting the first element of the input sequence $U(x(t))$ (see step~\ref{step:8}).
	
	Once the explicit DDPC law has been retrieved, the computation of the optimal input at each time instant simply consists of a function evaluation. Specifically, one has to $(i)$ search for the polyhedral region the current state $x(t)$ belongs to, and $(ii)$ apply the corresponding parametric law. We stress that this computational advantage is retained for simple control problems only (\emph{i.e.,} for short prediction horizon and small systems). Indeed, the complexity of the PWA law is known to rapidly increase \cite{Borrelli} with the one of the DDPC problem to be solved, analogously to the model-based case.   
	
	\subsection{Explicit data-driven predictive control and closed-loop stability}
	When designing a controller in a data-driven setting, it is crucial to check the stability of the resulting closed-loop system before the controller deployment. Towards this objective, we now show how the peculiar features of the explicit data-driven predictive control can be leveraged in combination with existing techniques to devise an off-line, data-driven stability test. To this end, let us assume that the $i$-th set of constraints is active and consider the following multi-step ahead closed-loop model:
	\begin{equation}\label{eq:multistep_cl}
		\hat{X}(t)\!=\!\mathcal{X}(\mathcal{H}_{d}^{\gamma})^{-1}\Phi_{d,i}^{\top}\Upsilon_{d,i}\tilde{S}_{i}x(t)\!+\!\mathcal{X}(\mathcal{H}_{d}^{\gamma})^{-1}\Phi_{d,i}^{\top}\Upsilon_{d,i}\tilde{W}_{i},
	\end{equation}
	obtained by combining \eqref{eq:CLpred} with the result of our explicit derivation in \eqref{eq:explicitG}, where $\hat{X}$ stacks the state predicted by the learned closed-loop model, \emph{i.e.,}
	\begin{equation*}
		\hat{X}(t)=		\begin{bmatrix}
			\hat{x}(t+1)\\
			\vdots\\
			\hat{x}(t+L)
		\end{bmatrix}.
	\end{equation*}
	From \eqref{eq:multistep_cl}, we can then isolate the learned one-step ahead closed-loop model, namely
	\begin{subequations}
		\begin{equation}
			\hat{x}(t+1)=A_{d,i}^{cl}x(t)+f_{d,i}^{cl},
		\end{equation}
		where
		\begin{align}
			&A_{d,i}^{cl}=[\mathcal{X}(\mathcal{H}_{d}^{\gamma})^{-1}\Phi_{d,i}^{\top}\Upsilon_{d,i}\tilde{S}_{i}]_{1:n,1:n},\\
			&f_{d,i}^{cl}=[\mathcal{X}(\mathcal{H}_{d}^{\gamma})^{-1}\Phi_{d,i}^{\top}\Upsilon_{d,i}\tilde{W}_{i}]_{1:n}.
		\end{align} 
	\end{subequations}
	When performed for each mode $i \in \{1,\ldots,M\}$, these manipulations allow us to retrieve the data-driven closed-loop transition matrix $A_{d,i}^{cl}$ for each $i \in \{1,\ldots,M\}$. Retrieving these matrices ultimately enables us to apply model-based techniques, \emph{e.g.,} the ones presented in \cite{Mignone}, to shed a light on the features of the final closed-loop. As an example, one can search for a matrix $\mathcal{P} \in \mathbb{R}^{n \times n}$ satisfying the following sufficient conditions for asymptotic stability:
	\begin{subequations}\label{eq:test1}
		\begin{align}
			& \mathcal{P}>0\\
			& (A_{d,i}^{cl})^{\top}\mathcal{P}A_{d,i}^{cl}-P <0, ~~i=1,\ldots,M,
		\end{align}
	\end{subequations}
	or, alternatively, look for the set of matrices $\{\mathcal{P}_{i} \in \mathbb{R}^{n\times n}\}_{i=1}^{M}$ verifying the following LMIs:
	\begin{subequations}\label{eq:test2}
		\begin{align}
			& \mathcal{P}_{i}>0,~~i=1,\ldots,M,\\
			&(A_{d,i}^{cl})^{\top}\mathcal{P}_{j}A_{d,i}^{cl}-P <0, ~~i,j=1,\ldots,M, 
		\end{align}
		which are also sufficient conditions for asymptotic stability.
	\end{subequations}
	\begin{remark}[On the tuning of $P$ in \eqref{eq:MPCcost}]
		When $\mathcal{S}$ is known to be open-loop stable, the terminal weight $P \succeq 0$ in \eqref{eq:MPCcost} is generally selected as the solution of the Lyapunov equation
		\begin{equation*}
			P=A^{\top}PA+Q.
		\end{equation*} 	
		This equation can be directly translated into its data-driven counterpart by exploiting \cite{DePersis2019} as follows:
		\begin{equation}\label{eq:DDLyap}
			P=A_{d}PA_{d}+Q,
		\end{equation}
		with
		\begin{equation*}
			A_{d}=X_{1,T}\!\begin{bmatrix}
				X_{0,T-1}\\
				\hline
				U_{0,1,T-1}
			\end{bmatrix}^{\dagger}\!\!\begin{bmatrix}
				I\\
				0
			\end{bmatrix},
		\end{equation*}
		thus providing a data-based approach for the selection of this parameter. 
	\end{remark}
	\begin{remark}[Hyper-parameter tuning]
		The possibility of performing an off-line data-based stability check on the data-driven explicit law can be useful to preliminarily assess the effects of different choices of the tuning parameters $Q$, $R$ and $P$ in \eqref{eq:MPCcost} and $\gamma$ in \eqref{eq:reg_cost}, allowing one to discard the ones resulting in a failure of the data-driven stability tests.
	\end{remark}
	
	\subsection{Regularization and noise handling}\label{Sec:noise}
	As highlighted in Remark~\ref{remark:noisefree}, all the equivalences we rely on to derive the explicit predictive control law are verified when $\mathcal{D}_{T}$ is noiseless. However, in practice, $\Omega$ in Assumption~\ref{ass:noisy_outputs} is generally a non-zero matrix. 
	
	To cope with noisy data, we follow the footsteps of \cite{Berberich2021,Dorfler2021} and propose to leverage on the regularization term introduced in \eqref{eq:reg_cost}. Indeed, as in standard ridge regression \cite{hastie2009elements}, this additional element of the cost steers all the components of $G(t)$ towards small values. In turn, this potentially limits the impact of noise on the constraints in \eqref{eq:regmpQPconstr}-\eqref{eq:regmpQPbuild} and, thus, on the final explicit law. This shrinkage effect is modulated by the regularization parameter $\gamma$, with the reduction in the magnitude of $G(t)$ being stronger whenever large values of $\gamma$ are considered. At the same time, $\gamma$ implicitly changes the balance between the penalties in the original data-driven control problem in \eqref{eq:DDPC}, with excessively high values of $\gamma$ potentially driving the explicit data-driven controller far away from its model-based counterpart. The choice of this hyper-parameters thus becomes an important \textit{tuning-knob} of the approach, requiring one to trade-off between handling noise and keeping explicit DDPC as close as possible to the implicit DDPC problem. 
	
	Although the stability checks in \eqref{eq:test1}-\eqref{eq:test2} can be used to have a preliminary assessment on the effect of different choices of $\gamma$, at the moment this balance can only be attained through closed-loop trials for several values of $\gamma$. Such a procedure allows one to ultimately select the hyper-parameter that best fits one's needs, at the price of requiring closed-loop experiments that can be rather safety-critical in practice, especially when a simulator of the system is not available. 
	
	Whenever multiple experiments can be performed by feeding the plant with the same input sequence $\mathcal{U}_{T}$, the burden associated to the choice of $\gamma$ can be alleviated by exploiting the features of Assumption~\ref{ass:noisy_outputs} itself. In this scenario, one can indeed replace $\mathcal{D}_{T}$ with the averaged dataset $\bar{\mathcal{D}}_{T}=\{\mathcal{U}_{T},\bar{\mathcal{Y}}_{T}\}$, where $\bar{\mathcal{Y}}_{T}=\{\bar{y}_{t}\}_{t=0}^{T}$ and 
	\begin{equation}\label{eq:averaged_y}
		\bar{y}_{t}=\frac{1}{N}\sum_{i=1}^{N}y_{t}^{(i)},
	\end{equation}    
	with $y_{t}^{(i)}$ denoting the output of the $i$-th experiment. Since the noise is assumed to be zero mean, the law of large numbers asymptotically yields
	\begin{equation}\label{eq:asymp_convergence}
		\bar{y}_{t} \underset{N \rightarrow \infty}{\longrightarrow} y_{t}^{\mathrm{o}}.
	\end{equation} 
	As such, when the number $N$ of experiments increases, the role of $\gamma$ in handling noise is progressively less dominant. In this case, $\gamma$ should then be used \textit{only} to make the DDPC problem well-defined. Any small $\gamma>0$ is acceptable for this purpose.
	
	\section{Numerical examples}\label{sec:examples}
	The performance of the explicit predictive controller are now assessed on two benchmark examples: $(i)$ the regulation to zero of the stable open-loop system of \cite{Bemporad2002b}, for the case when the state is fully measurable; 
	and $(ii)$ the altitude control of a quadcopter. Since the last  example features an open-loop unstable linearized plant, data are collected in closed-loop, by assuming that the drone is stabilized by an (unknown) pre-existing controller. In both the examples, the level of noise acting on the measured states is assessed through the averaged Signal-to-Noise-Ratio (SNR):
	\begin{equation}\label{eq:SNR}
		\overline{\mbox{SNR}}\!=\!\frac{1}{n}\!\sum_{i=1}^{n} 10 \log_{10}\!\left(\!\frac{\sum_{t=0}^{T}(x_{i}(t)-w_{i}(t))^{2}}{\sum_{t=0}^{T} (w_{i}(t))^{2}}\!\right)\!,~\mbox{[dB]}
	\end{equation}
	where $x_{i}(t)$ and $w_{i}(t)$ denote the $i$-th components of the state and the measurement noise, respectively. All computations are carried out on an Intel Core i7-7700HQ processor, running MATLAB 2019b.    
	
	\subsection{Open-loop stable benchmark system}
	\begin{figure}[!tb]
		\centering
		\includegraphics[scale=.375]{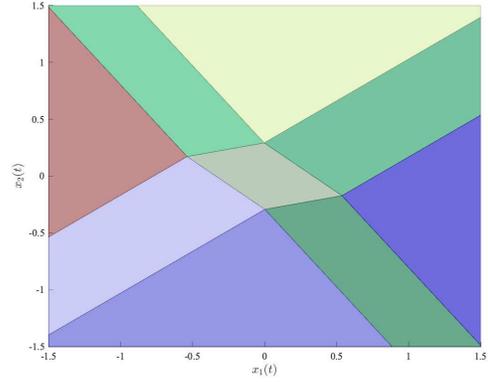}
		\caption{Open-loop stable benchmark: polyhedral partition of the explicit data-driven law.}\label{fig:OL_partition}
	\end{figure}
	\begin{figure}[!tb]
		\centering
		\begin{tabular}{c}
			\subfigure[State trajectories.]{\includegraphics[scale=0.35]{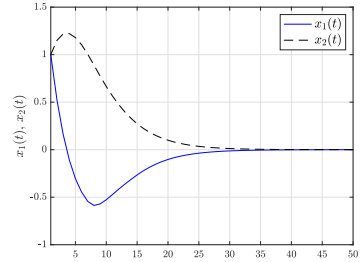}}\\ \subfigure[Input]{\includegraphics[scale=0.35]{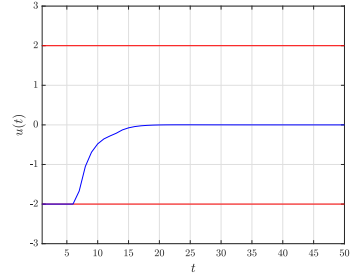}}
		\end{tabular}
		\caption{Open-loop stable benchmark: state and input trajectories obtained with the explicit data-driven law.}\label{fig:OL}
	\end{figure}
	Let us consider the benchmark system described by:
	\begin{equation}
		\mathcal{S}:~x(t\!+\!1)\!=\!\!\begin{bmatrix}
			0.7326 & -0.0861\\
			0.1722 & 0.9909
		\end{bmatrix}x(t)\!+\!\begin{bmatrix}
			0.0609\\
			0.0064
		\end{bmatrix}u(t).
	\end{equation}
	Our goal is to regulate both components of the state to zero, while enforcing the following box-constraint on the input:
	\begin{equation}
		-2 \leq u(k) \leq 2.
	\end{equation}
	Towards this goal, we collect a set $\mathcal{D}_{T}$ of $T=100$ input/state pairs, by feeding $\mathcal{S}$ with an input sequence uniformly distributed within the interval $[-5,5]$. According to Assumption~\ref{ass:noisy_outputs}, the measured states are corrupted by an additive zero-mean white noise sequence, with variance yielding $\overline{\mbox{SNR}}=20$~dB. The parameters characterizing the DDPC problem to be solved are selected as in \cite{Bemporad2002b}, namely $L=2$, $Q=I$, $R=0.01$ and $P$ is chosen as the solution of the data-driven Lyapunov in \eqref{eq:DDLyap}. By setting $\gamma=10$, the partition associated with the explicit data-driven predictive controller\footnote{The partition is plotted thanks to the Hybrid Toolbox~\cite{HybTBX}.} is the one shown in \figurename{~\ref{fig:OL_partition}}, which approximately correspond to that reported in \cite{Bemporad2002b}\footnote{The negligible differences with respect to the model-based partition are due to the noise on the batch data.}. Prior to the controller deployment, we have performed the data-based closed-loop stability check in \eqref{eq:test1}, resulting in\footnote{The LMIs in \eqref{eq:test1} are solved with CVX \cite{cvx,gb08}.} 
	\begin{equation*}
		\mathcal{P}=\begin{bmatrix}
			24.8695 & 10.5595\\
			10.5595 & 43.2657
		\end{bmatrix} \succ 0.
	\end{equation*}   
	This indicates that the explicit law preserves the stability of the open-loop system. \figurename{~\ref{fig:OL}} report the trajectories of the state and the optimal input obtained over a noise-free closed-loop tests with the explicit data-driven law, which confirm its effectiveness and validate the result of the data-driven stability check.
	\begin{figure*}[!tb]
		\centering
		\begin{tabular}{ccc}
			\subfigure[$\overline{\mbox{SNR}}=$40 dB]{\includegraphics[scale=0.3,,trim=.75cm 0cm 0cm 0cm,clip]{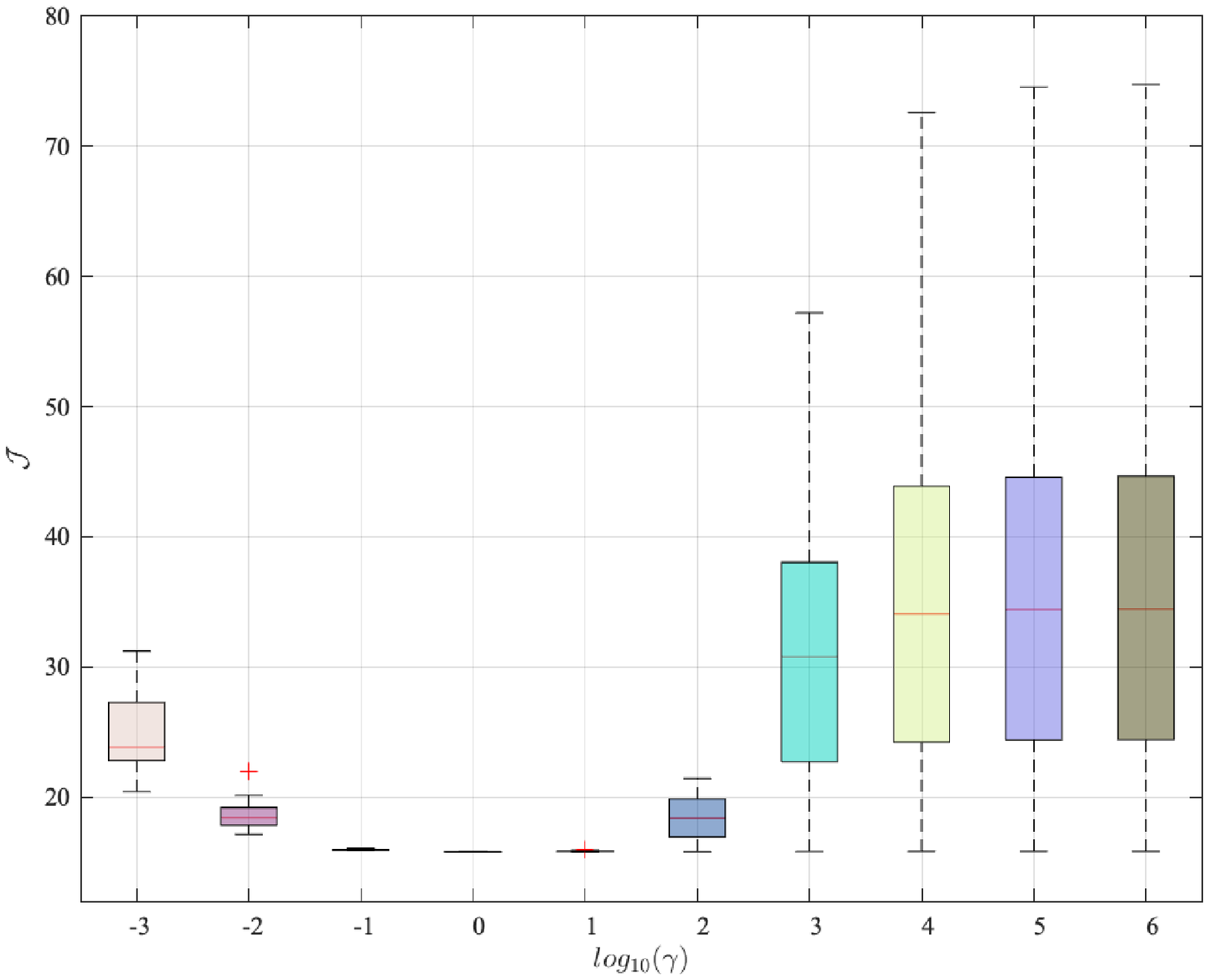}}&
			\subfigure[$\overline{\mbox{SNR}}=$20 dB]{\includegraphics[scale=0.3,,trim=.75cm 0cm 0cm 0cm,clip]{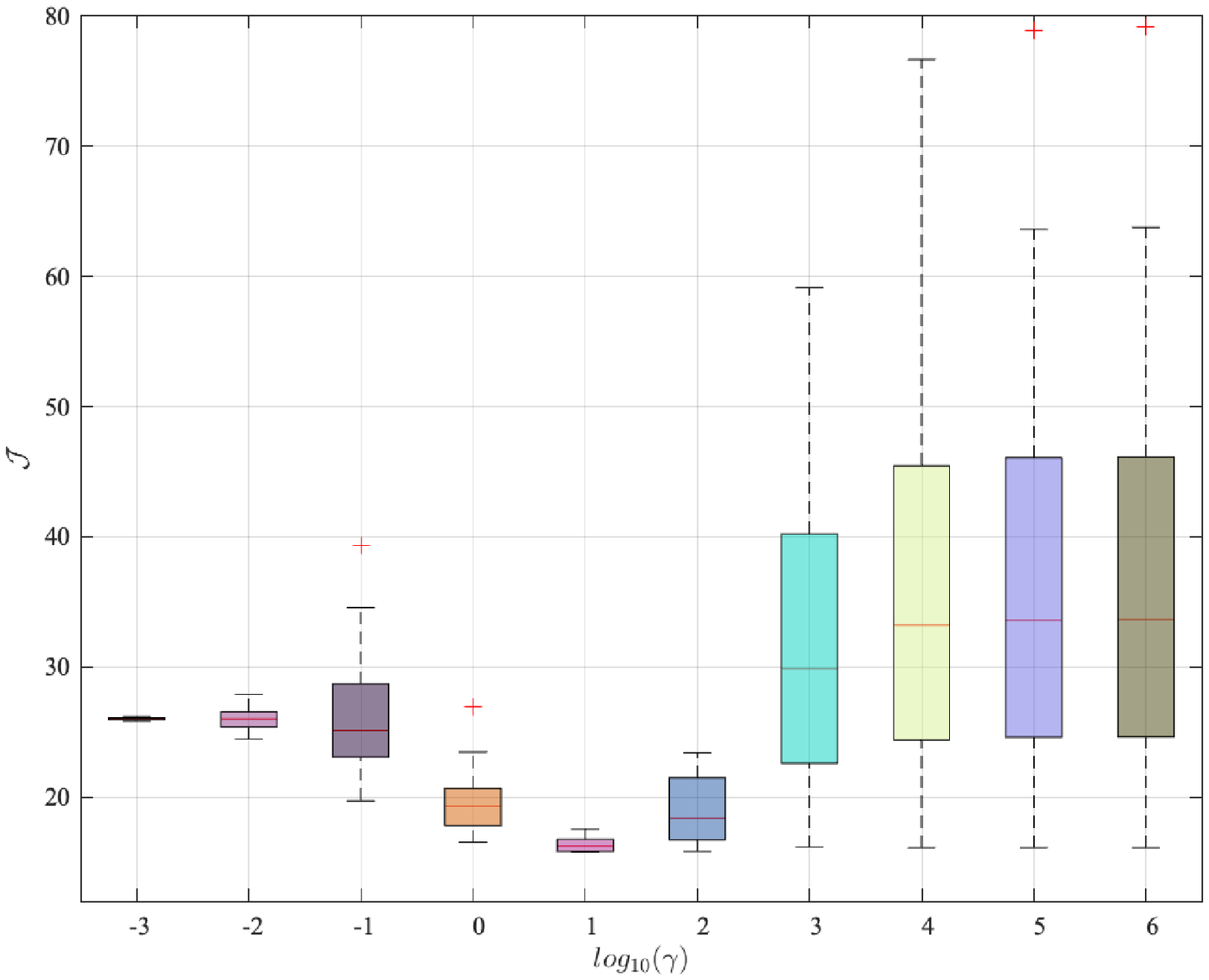}} &
			\subfigure[$\overline{\mbox{SNR}}=$10 dB]{\includegraphics[scale=0.3,,trim=.75cm 0cm 0cm 0cm,clip]{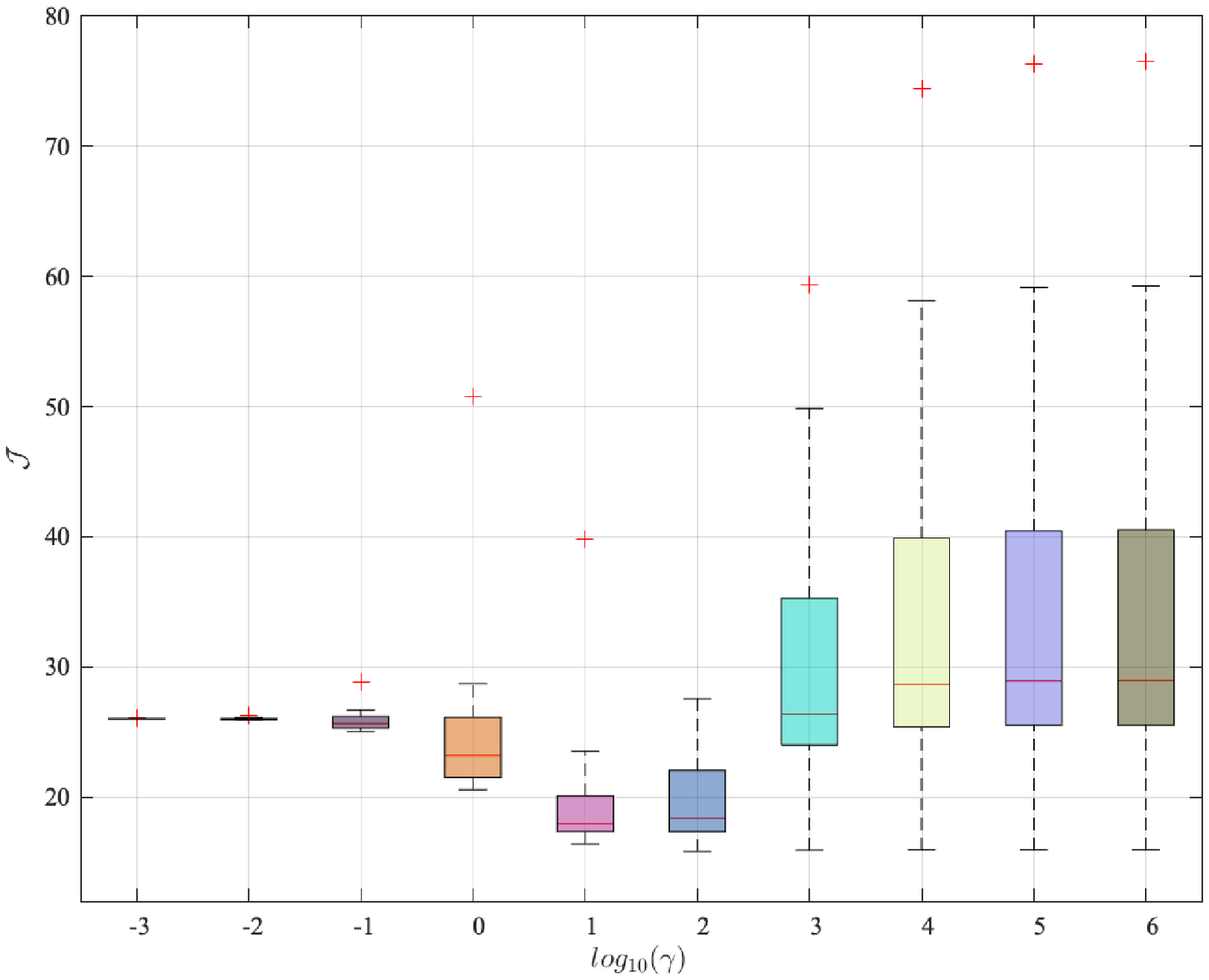}}
		\end{tabular}
		\caption{Open-loop stable benchmark: $\mathcal{J}$ in \eqref{eq:performance_index} \emph{vs} $\log_{10}(\gamma)$ for and increasing levels of noise over 30 realizations of $\mathcal{D}_{T}$.}\label{fig:BOXPLOT}
	\end{figure*}
	
	We now assess the sensitivity of the explicit controller to the choice of $\gamma$ over $20$ Monte-Carlo realizations of the batch datasets $\mathcal{D}_{T}$, for different noise levels. This evaluation is performed by looking at the cost of the controller over the same noiseless closed-loop test of length $T_{v}=50$ considered previously, \emph{i.e.,}
	\begin{equation}\label{eq:performance_index}
		\mathcal{J}=\sum_{t=0}^{T_{v}} \|x(t)\|_{Q}^{2}+\|u(t)\|_{R}^{2}.
	\end{equation}      
	As shown in \figurename{~\ref{fig:BOXPLOT}}, the value of $\gamma$ that leads to the minimum closed-loop cost tends to decrease when the noise level increases, supporting our considerations in Section~\ref{Sec:noise}. At the same time, by properly selecting $\gamma$ we attain a cost $\mathcal{J}=16.37 \pm 0.58$, which is generally close to the oracle $\mathcal{J}^{\mathcal{O}}=15.77$,
	which is the one achieved by the \emph{oracle} law, \emph{i.e.,} the model-based predictive controller obtained by exploiting the true model of $\mathcal{S}$. These results additionally show that, for increasing levels of noise, the choice of $\gamma$ becomes more challenging, since the range of values leading to the minimum $\mathcal{J}$ progressively shrinks. Note that, when $\gamma$ is excessively small the optimal input is always zero and $\mathcal{S}$ evolves freely\footnote{This behavior is also observed for $\gamma\!\leq\!10^{-6}$, $\gamma\!\leq\!10^{-4} $ and $\gamma\!\leq\!10^{-3}$ when $\overline{\mbox{SNR}}\!=\!40$~dB, $\overline{\mbox{SNR}}\!=\!20$~dB and $\overline{\mbox{SNR}}\!=\!10$~dB, respectively.}.     

	\begin{figure*}[!tb]
		\centering
		\begin{tabular}{ccc}
			\subfigure[$N=1$]{\includegraphics[scale=0.3,,trim=.75cm 0cm 0cm 0cm,clip]{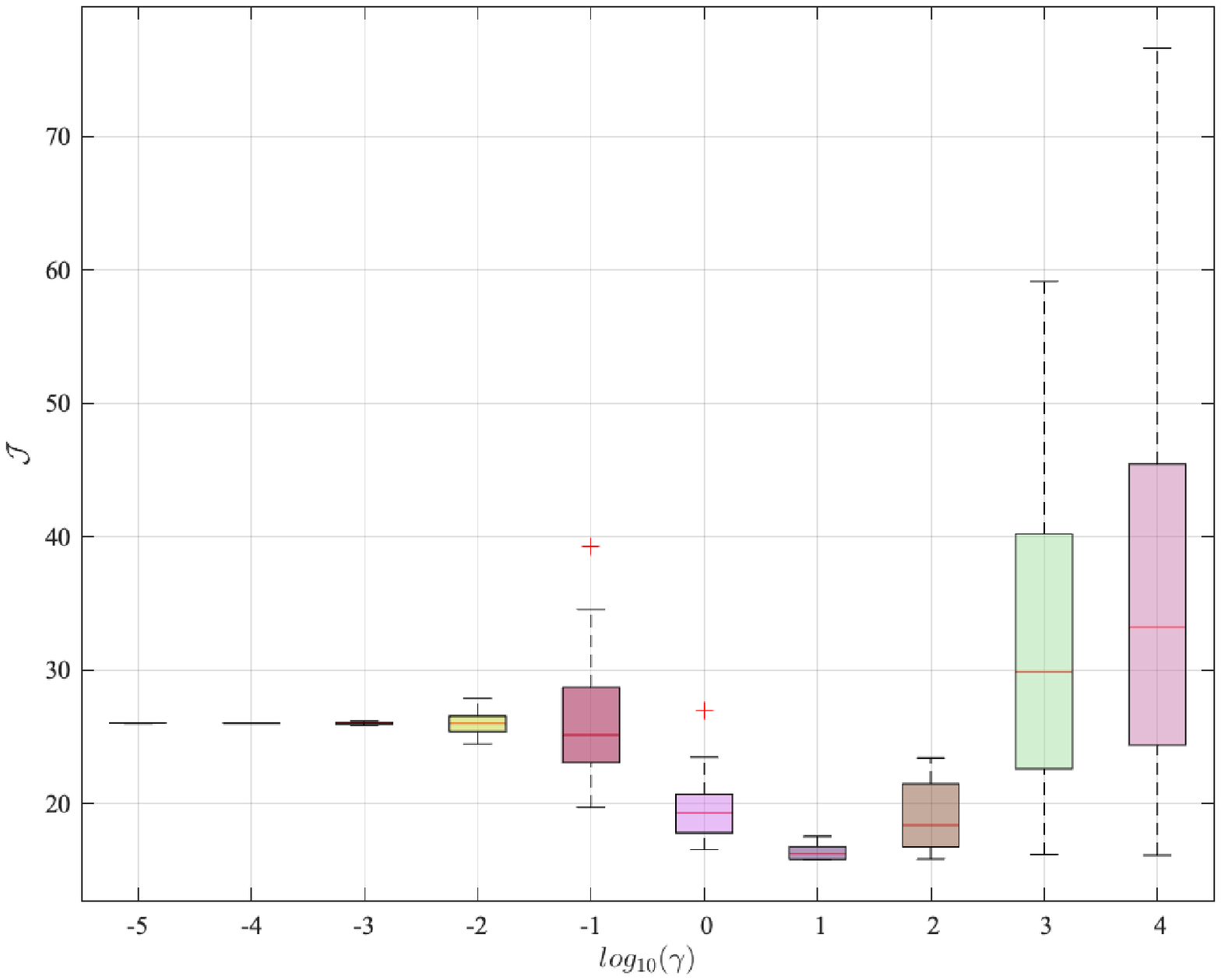}}&
			\subfigure[$N=10$]{\includegraphics[scale=0.3,trim=.75cm 0cm 0cm 0cm,clip]{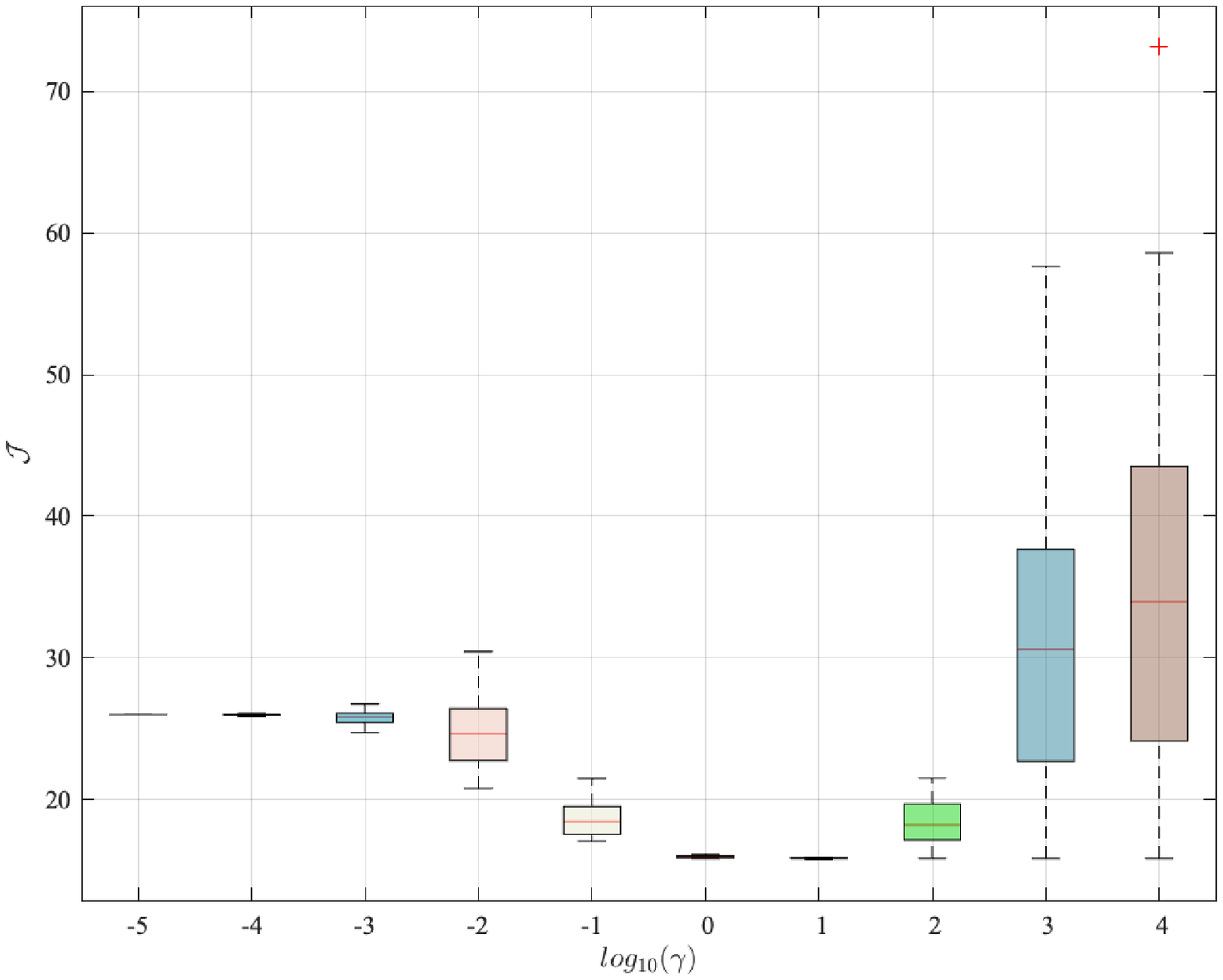}} &
			\subfigure[$N=100$]{\includegraphics[scale=0.3,trim=.75cm 0cm 0cm 0cm,clip]{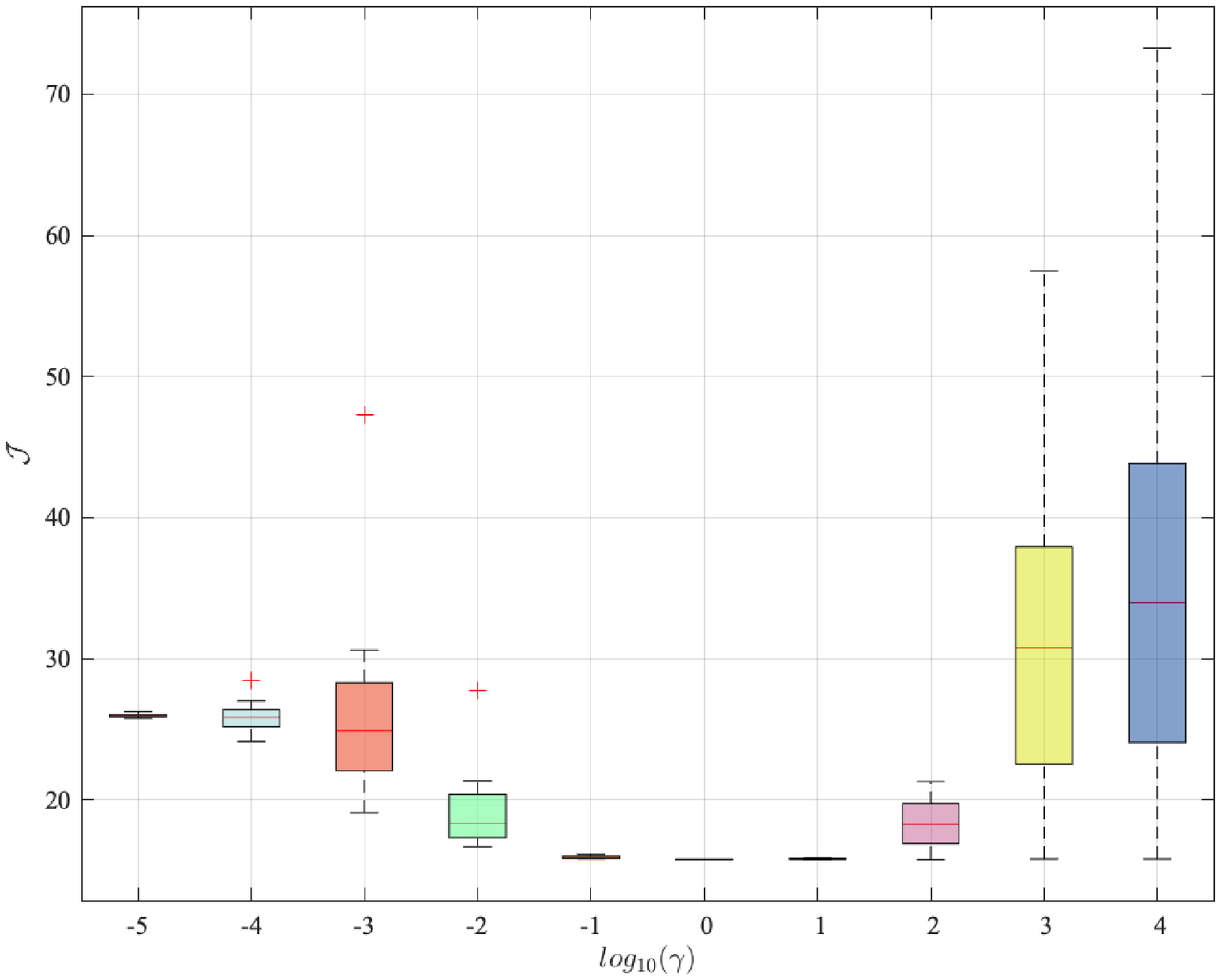}}
		\end{tabular}
		\caption{Open-loop stable benchmark: $\mathcal{J}$ in \eqref{eq:performance_index} \emph{vs} $\log_{10}(\gamma)$ for an increasing number of repeated experiments over 30 realizations of $\mathcal{D}_{T}$.}\label{fig:BOXPLOT2}
	\end{figure*}
	
	We additionally evaluate the effect of averaging, by looking at the performance index $\mathcal{J}$ in \eqref{eq:performance_index} over $30$ Monte-Carlo data-collections for an increasing number $N$ of repeated experiments of length $T=100$. The measurements are affected by noise, yielding $\overline{\mbox{SNR}}=20$~dB. \figurename{~\ref{fig:BOXPLOT2}} shows that the use of the averaged dataset $\bar{\mathcal{D}}_{T}$ has a similar effect to a reduction of the noise level. Indeed, for increasing $N$ the optimal $\gamma$ slowly shifts towards smaller values, thus further implying the gradual reduction in the impact of $\gamma$ on noise handling.

	\subsection{Altitude control of a quadcopter}
	\begin{figure}[!tb]
		\centering
		\begin{tabular}{c}
			\subfigure[Take-off]{\includegraphics[scale=.6]{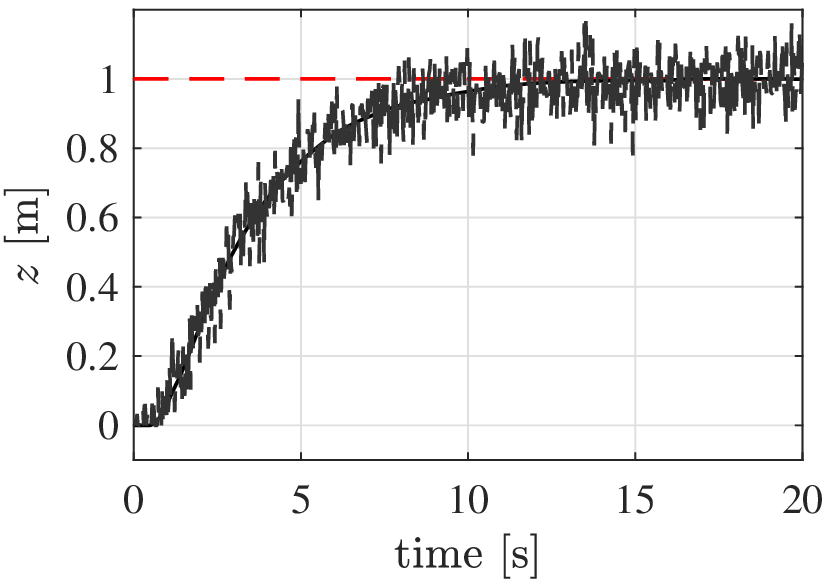}}\\
			\subfigure[Landing]{\includegraphics[scale=.6]{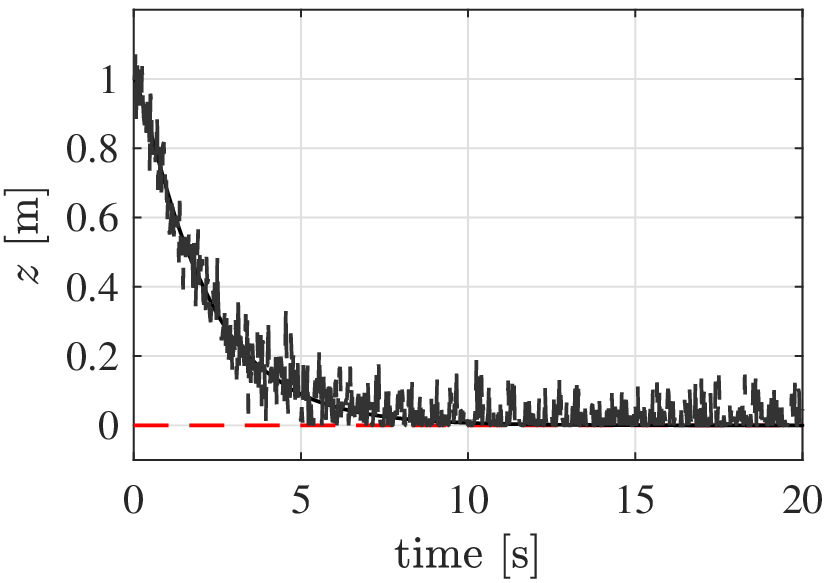}}
		\end{tabular}
		\caption{Altitude control: measured (dotted-dashed gray line) and actual (black line) altitude \emph{vs} reference (dashed red line).}\label{fig:noisy_testQUAD}
	\end{figure}
	As a final case study, we consider a nonlinear system, namely the problem of controlling the altitude of a quadcopter, to perform landing or take-off maneuvers. To this end, we exploit the same simulator used in \cite{Formentin2011} to collect the data and to carry out the closed-loop experiments with the learned explicit law. Let $z(t)$~[m] be the altitude of the quadcopter, $v_{z}(t)$~[m/s] be its vertical velocity and $(\theta(t),\phi(t),\psi(t))$~[deg] its roll, pitch and yaw angles at time $t$. Both the the altitude $z(t)$~[m] and the vertical velocity $v_{z}(t)$ are assumed to be measured, with the measurement being corrupted by a zero-mean white noise, resulting in $\overline{\mbox{SNR}} =$ 30~dB over these two outputs. As this system is open-loop unstable, the data collection phase is carried out in closed-loop for $20$~[s] at a sampling rate of $40$~[Hz], by using the four \emph{proportional derivative} (PD) controllers introduced in \cite{Formentin2011}. The altitude set point used at this stage is generated uniformly at random in the interval $[0,4]$~[m]. The set points for all the attitude angles are instead selected as slowly variable random signals within the interval $[-0.2,0.2]$~[rad]. These choices yield a dataset $\mathcal{D}_{T}$ of length $T=800$, that satisfies Assumption~\ref{ass:persistency of excitation} and allows us to retain information on possible non-zero angular configurations.  
	
	The three attitude controllers introduced in \cite{Formentin2011} are further retained in testing to keep the attitude angles at zero and to decouple the altitude dynamics from that of the other state variables. Within this setting, the explicit data-driven law is designed by imposing $L=5$, $Q=P=\mbox{diag}([1,0.1])$, $R=10^{-5}$ and $\gamma=1$. To mitigate the effect of the gravitational force, the design and closed-loop deployment of the designed explicit controller are carried out by pre-compensating it. As a result, the input to be optimized is
	\begin{equation}\label{eq:compensation}
		u(t)=\frac{u_{z}(t)}{m}-g,
	\end{equation}
	where $m=0.5$~[kg] is the mass of the quadcopter, $g=9.81$~[m/s] is the gravitational acceleration and $u_{z}(t)$ is the input prior to the compensation. A similar approach is adopted for the control problem to fit our framework in both landing and take-off scenarios. We thus consider the reduced state
	\begin{equation}
		x(t)=\begin{bmatrix}
			z(t)-\bar{z}\\
			v_{z}(t)
		\end{bmatrix},
	\end{equation}
	where $\bar{z}$~[m] is the altitude set point. To avoid potential crashes of the quadcopter, in designing the explicit law we impose the following constraint on the state of the system:
	\begin{equation}
		x_{1}(t) \geq -\bar{z},
	\end{equation}
	which, in turn, guarantees the altitude to be always non-negative. Meanwhile, the pre-compensated input is constrained to the interval:
	\begin{equation}
		-9.81 \leq u(t) \leq 9.564,
	\end{equation} 
	where the lower bound corresponds to a null input and the upper limit is dictated by the maximum power of the motors\footnote{The reader is referred to \cite{Formentin2011} for additional details on the system.}.
	
	The performance of the learned explicit law attained in take-off and landing are reported in \figurename{~\ref{fig:noisy_testQUAD}}. Here we consider closed-loop tests in which the altitude and the vertical velocity are noisy, with the noise acting on the closed-loop measurements sharing the features of that corrupting the batch ones. Despite the noise acting on the initial condition at each step, both maneuvers are successfully performed, thus showing the effectiveness of the retrieved explicit data-driven laws.  
	
	\section{Conclusions}\label{sec:conclusions}
	By leveraging on the known PWA nature of the explicit MPC law within linear quadratic predictive control, in this paper we propose an approach to derive such an explicit controller from data only, without undertaking a full modeling/identification step. Thanks to the formalization of the problem, well-known model-based techniques can be straightforwardly adapted to check the stability of the closed-loop system before deploying the controller.\\
	Future research will be devoted to extend these preliminary results to cases in which the state is not fully measurable, to exploit priors to guarantee practical closed-loop stability by design. Future work will also be devoted to formalize the connections between the explicit solution proposed in this paper and the one introduced in \cite{Breschi2021b}, consequently providing a comparative analysis of the two approaches.
	
	\bibliographystyle{plain}
	\bibliography{EDDPC} 
	
\end{document}